\newcommand{\tw}{{\rm tw}}
\newcommand{\FPT}{$\mathsf{FPT}$ }
\newtheorem{theorem}{Theorem}
\newtheorem{lemma}[theorem]{Lemma}
\newtheorem{observation}[theorem]{Observation}
\newtheorem{proposition}[theorem]{Proposition}
\theoremstyle{definition}
\newtheorem{definition}[theorem]{Definition}
\newtheorem{remark}{Remark}
\crefname{claim}{Claim}{Claims}
\let\oldlambda\lambda
\renewcommand{\lambda}{\ensuremath{\oldlambda}\xspace}
\let\oldalpha\alpha
\renewcommand{\alpha}{\ensuremath{\oldalpha}\xspace}
\let\oldDelta\Delta
\renewcommand{\Delta}{\ensuremath{\oldDelta}\xspace}
\newcommand{\eps}{\ensuremath{\varepsilon}\xspace}
\renewcommand{\epsilon}{\eps}
\newcommand{\ignore}[1]{}
\renewcommand{\leq}{\leqslant}
\renewcommand{\geq}{\geqslant}
\title{A Parameterized Perspective on Uniquely Restricted Matchings}
\author[1]{Juhi~Chaudhary\thanks{Supported by the DAE, Government of India, project no.~RTI4001. (email: juhi.chaudhary@tifr.res.in).}}
\author[2]{Ignasi~Sau\thanks{Supported by the French ANR project ELIT (ANR-20-CE48-0008). (email: ignasi.sau@gmail.com).}}
\author[3]{Meirav~Zehavi\thanks{Supported by the ERC, grant no. 101039913. (email: zehavimeirav@gmail.com).}}
\affil[1]{\small School of Technology and Computer Science, Tata Institute of Fundamental Research, Mumbai, 
India.}
\affil[2]{\small LIRMM, Université de Montpellier, CNRS, Montpellier, France.}
\affil[3]{\small Department of Computer Science, Ben-Gurion University of the Negev, Beersheba, Israel.}
\date{}
\begin{document}

\maketitle

\begin{abstract}
Given a graph $G$, a \emph{matching} is a subset of edges of $G$ that do not share an endpoint. A matching $M$ is \emph{uniquely restricted} if the subgraph induced
by the endpoints of the edges of $M$ has exactly one perfect matching. Given a graph $G$ and a positive integer $\ell$, \textsc{Uniquely Restricted Matching} asks whether $G$ has a uniquely restricted matching of size at least $\ell$. In this paper, we study the parameterized complexity of \textsc{Uniquely Restricted Matching} under various parameters. Specifically, we show that \textsc{Uniquely Restricted Matching} admits a fixed-parameter tractable (FPT) algorithm on line graphs when parameterized by the \emph{solution size}. We also establish that the problem is FPT when parameterized by the \emph{treewidth} of the input graph.
Furthermore, we show that \textsc{Uniquely Restricted Matching} does not admit a polynomial kernel with respect to the vertex cover number plus the size of the matching unless $\mathsf{NP}\subseteq \mathsf{coNP} \slash \mathsf{poly}$.
\bigskip

\noindent\textbf{Keywords:} Matching, Uniquely Restricted Matching, Parameterized Algorithms, Kernelization Lower Bounds
\end{abstract}

\section{Introduction}
Matchings in graphs are among the most fundamental and extensively explored topics in combinatorial optimization. While the classical matching problem can be solved efficiently, introducing additional constraints typically leads to computational intractability. For instance, in a $\mathcal{P}$-matching, the matching $M$ must be such that the subgraph $G[V_M]$, induced by the endpoints of the edges in $M$, satisfies a specified property $\mathcal{P}$. These properties can vary widely, such as requiring the induced subgraph to be itself a matching (\emph{induced matching}), to be connected (\emph{connected matching}), to form a forest (\emph{acyclic matching}), or to contain exactly one perfect matching (\emph{uniquely restricted matching}). In the case of a classical matching, the property $\mathcal{P}$ is simply that $G[V_M]$ forms an arbitrary subgraph of $G$. 

In this paper, we will focus on uniquely restricted matchings. The idea of uniquely restricted matchings was introduced by Golumbic et al.~\cite{golumbic2001uniquely}, inspired by a problem in linear algebra. In bipartite graphs, finding a maximum uniquely restricted matching is closely related to identifying the largest upper triangular submatrix obtainable by permuting the rows and columns of a given matrix. These submatrices are particularly useful because they make solving the corresponding systems of linear equations much faster (see \cite{golumbic2001uniquely} for more details). This concept has interesting applications---not just in matrix theory \cite{hershkowitz1993ranks}, but also in modeling compatibility among a group of agents, such as people, robots, or even spies \cite{golumbic2001uniquely}. The concept of uniquely restricted matching was revisited along with other variants of matching in a paper by Goddard et al., who also discussed the minimum maximal versions of these variants \cite{goddard2005generalized}.

It is worth noting that some of the other matching variants have been extensively studied due to their diverse applications and theoretical significance, while others are more recently explored and gaining attention. The examples include, \textsc{Induced Matching}~\cite{DBLP:conf/stacs/Koana23,moser2009parameterized,cameron1989induced,lampis2025structural,chaudhary2023matchings,song2021induced}, \textsc{Acyclic Matching}~\cite{lampis2025structural,chaudhary2023matchings,furst2019some,panda2012acyclic,chaudhary2025parameterized,goddard2005generalized,panda2023acyclic,hajebi2023parameterized,furst2018lower}, \textsc{Weighted Connected Matching} ~\cite{gomes2025weighted}, and \textsc{Disconnected Matching}~\cite{chaudhary2023matchings, gomes2023disconnected}. The common goal in these problems is to find a $\mathcal{P}$-matching of maximum cardinality. In line with this, in this paper, we focus on the \textsc{Uniquely Restricted Matching} problem, the decision version of which is formally defined as follows:
 \medskip

		\noindent\fbox{ \parbox{160mm}{
		\noindent {\textsc{Uniquely Restricted Matching:}}\\
		\textbf{Input:} An undirected graph $ G$ and a positive integer $\ell$.\\
		\textbf{Question:} Does $G$ have a uniquely restricted matching (urm) of size at least $\ell$?}}
		\medskip

    The \emph{uniquely restricted matching number} of a graph $G$ is
the maximum cardinality of a uniquely restricted matching among all uniquely restricted matchings in $G$, and we denote it by $\mu_{urm}(G)$.
\subsection{Related Work}
Over the years, the computational complexity of \textsc{Uniquely Restricted Matching} has been investigated across a variety of graph classes. \autoref{tablepig1} provides a consolidated overview of these results. Beyond algorithmic aspects, the problem has also been studied from a structural perspective. For instance, in response to a question posed by Golumbic et al.~\cite{golumbic2001uniquely}, Penso et al.~\cite{penso2018graphs} characterized the class of graphs in which some maximum matching is uniquely restricted. They further resolved a problem raised by Levit and Mandrescu~\cite{levit2003local} by identifying the graphs in which every maximum matching is uniquely restricted.
In terms of approximability, Mishra~\cite{mishra2011maximum} showed that \textsc{Uniquely Restricted Matching} is hard to approximate within a factor of $\mathcal{O}(n^{\frac{1}{3}-\epsilon})$ for any $\epsilon > 0$, unless $\mathsf{NP} = \mathsf{ZPP}$, even when the input is restricted to bipartite, split, chordal, or comparability graphs. On the positive side, he provided approximation algorithms with factors of $2 - \frac{4}{n - 2}$ for 3-regular bipartite graphs and 4 for general 3-regular graphs. Building on this, Baste et al.~\cite{baste2019approximating,baste2019upper} improved the approximation ratio to $9/5$ for subcubic bipartite graphs, refining the earlier 2-approximation. In addition to these results, they also analyzed the \emph{uniquely restricted chromatic index}, defined as the minimum number of uniquely restricted matchings into which the edge set of a graph can be partitioned.

\begin{table} [h]
	\scalebox{0.88}{

	\begin{tabular}{|l|l|c|c|}
		\hline

	&	  & \underline{\textbf{Graph Class}} & \underline{\textbf{Result}} \\
 
	1. &	Golumbic et al. \cite{golumbic2001uniquely}  & Bipartite graphs, Split graphs & \textsf{NP}-hard  \\

&	  &  Threshold graphs, Cacti,
Block graphs & linear-time solvable\\

 2.    &  Mishra \cite{mishra2011maximum}  & Bipartite graphs of degree at most 3  & \textsf{APX}-complete\\
     
	3. &	Francis et al. \cite{francis2018uniquely}   & Interval graphs  &  polynomial-time solvable\\

  &	  & Proper interval graphs, Bipartite permutation graphs &  linear-time solvable\\

		\hline
	\end{tabular}

 }
	
		\caption{Overview of algorithmic results for \textsc{Uniquely Restricted Matching}. }
	\label{tablepig1}

\end{table}

While several matching variants have been well-studied within the parameterized complexity framework, the \textsc{Uniquely Restricted Matching} problem remains relatively under-explored in this regard. A recent result establishes that, assuming $\mathsf{W[1]} \nsubseteq \mathsf{FPT}$, there is no $\mathsf{FPT}$ algorithm that can approximate the problem within any constant factor when parameterized by the size of the matching thereby also establishing its $\mathsf{W[1]}$-hardness \cite{chaudhary2025parameterized}.  It is also known that the problem admits a kernel of size $\mathcal{O}(\mu_{\text{urm}}(G))$ on planar graphs, where $\mu_{\text{urm}}(G)$ denotes the size of a maximum uniquely restricted matching in $G$ \cite{chaudhary2025parameterized}.

The minimum maximal version of uniquely restricted matching (Min-Max-URM) has also been investigated in the literature, where the objective is to find a maximal uniquely restricted matching of the smallest possible size. 
The decision version of Min-Max-URM is known to be NP-complete for several graph classes, including bipartite graphs with maximum degree 7, chordal bipartite graphs, star-convex bipartite graphs, chordal graphs, and doubly chordal graphs~\cite{chaudhary2021complexity,panda2017complexity}. On the positive side, Min-Max-URM can be solved in linear time for specific graph classes such as bipartite distance-hereditary graphs, bipartite permutation graphs, proper interval graphs, and threshold graphs~\cite{chaudhary2021complexity}. Furthermore, the problem is APX-complete for graphs with maximum degree 4, and cannot be approximated within a ratio of $n^{1-\epsilon}$ for any $\epsilon>0$ unless $\mathsf{P}=\mathsf{NP}$ even for bipartite graphs \cite{chaudhary2021complexity}.

\subsection{Our Contribution}
In this paper, we study the parameterized complexity of \textsc{Uniquely Restricted Matching} under various parameters, namely solution size, treewidth, and vertex cover number. 

We begin by studying the problem parameterized by the solution size, which is arguably the most natural choice of parameter. Since the problem is W[1]-hard on general graphs under this parameterization, we focus our attention on line graphs, where several variants of matching have already been explored. For acyclic matching, the following characterization is known: Given a graph $H$, its line graph $G=L(H)$ admits an acyclic matching of size $\ell$ if and only if $H$ contains $t$ vertex-disjoint paths $P_1, \ldots, P_t$, for some positive integer $t$, such that each $P_i$ has even length and the total sum of their lengths satisfies $\sum_{i=1}^{t} \text{length}(P_i) = 2\ell$~\cite{hajebi2023parameterized}.  Similarly, for induced matchings, it is known that $L(H)$ has an induced matching of size at least $\ell$ if and only if $H$ has at least $\ell$ vertex-disjoint copies (not necessarily induced) of $P_3$, the path on three vertices~\cite{moser2009parameterized}. 

Motivated by these structural characterizations, we investigate the computational complexity of \textsc{Uniquely Restricted Matching} on line graphs. In particular, we first give the following characterization. 

  \begin{restatable}{lemma}{liner}\label{thm:liner}
     Let $H$ be a graph and $G=L(H)$ be the line graph of $H$. Then, $G$ has a urm of size $\ell$ if and only if $H$ contains $\ell$ edge-disjoint paths $W_1, \ldots, W_\ell$, each of length $2$ such that $\bigcup_{i\in [\ell]}{W_i}$ is a forest, and no two distinct paths $W_i$ and $W_j$ (for $i\neq j$) together form an edge-induced subgraph isomorphic to $K_{1,4}$. 
 \end{restatable}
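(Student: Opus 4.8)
The plan is to translate the statement entirely into the host graph $H$ via the line-graph dictionary, and then reduce the ``exactly one perfect matching'' requirement to a counting question about orientations of $H'$. First I would record the correspondence: a vertex of $G=L(H)$ is an edge of $H$, and an edge of $G$ is an unordered pair of edges of $H$ sharing an endpoint, i.e.\ a path of length $2$ in $H$. Hence a matching $M$ of size $\ell$ in $G$ is exactly a collection $W_1,\dots,W_\ell$ of pairwise edge-disjoint length-$2$ paths in $H$ (edge-disjointness being precisely the condition that the corresponding edges of $G$ share no vertex of $G$). Writing $H'=\bigcup_{i\in[\ell]}W_i$ for the subgraph of $H$ on these $2\ell$ edges, the induced subgraph $G[V_M]$ is exactly $L(H')$, and a perfect matching of $L(H')$ is the same thing as a partition of $E(H')$ into length-$2$ paths. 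Since $\{W_1,\dots,W_\ell\}$ is itself such a partition, $M$ is uniquely restricted if and only if this is the \emph{only} partition of $E(H')$ into length-$2$ paths. I would also observe that, because the $W_i$ are edge-disjoint, two of them form an edge-induced $K_{1,4}$ if and only if they have the same center vertex; so the two hypotheses become ``$H'$ is a forest'' and ``the centers of the $W_i$ are pairwise distinct,'' and the lemma reduces to the claim that $\{W_i\}$ is the unique partition iff $H'$ is a forest with the $W_i$ having distinct centers.

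The key device I would use is an orientation encoding. Every partition of $E(H')$ into length-$2$ paths corresponds to orienting the two edges of each path toward its center; this yields an orientation of $H'$ in which every vertex has even in-degree, together with a pairing of the in-edges at each vertex (each pair being one path centered there). Conversely, any such pair (orientation with all in-degrees even, pairing of in-edges at each vertex) gives back a partition. The number of pairings of the in-edges at a vertex is $1$ when its in-degree is $0$ or $2$ and is strictly greater than $1$ when its in-degree is at least $4$. Moreover, the orientations of $H'$ with all in-degrees even form a coset of the cycle space: two of them differ by reversing the edges of a subgraph in which every vertex has even degree, and in a forest the only such subgraph is empty.

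Combining these, there is a unique even-in-degree orientation if and only if $H'$ is a forest, and when $H'$ is a forest the unique orientation forces the pairing at every vertex if and only if every in-degree is at most $2$, i.e.\ iff each vertex is the center of at most one $W_i$, i.e.\ iff the centers are distinct. Counting partitions as the sum, over valid orientations, of the product over vertices of the number of pairings, one gets total count $1$ exactly when $H'$ is a forest and all in-degrees are at most $2$, which is precisely ``forest plus distinct centers.'' This settles both directions of the lemma at once.

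I expect the main work, and the main obstacle, to lie in the uniqueness direction (forest plus distinct centers $\Rightarrow$ unique partition); the converse is easy, since a cycle in $H'$ lets me reverse the orientation around it to obtain a second valid orientation, and a shared center lets me re-pair the four edges at that center, each producing a genuinely different partition. A naive attempt via $M$-alternating cycles in $L(H')$ instead gets bogged down in degenerate cases, where the vertices of $H'$ touched by an alternating cycle coincide, forcing one to separate ``genuine'' $4$-cycles of $H'$ from local $K_{1,4}$-swaps by hand; the orientation/cycle-space argument is exactly what lets me avoid that case analysis. The remaining care is then routine: verifying the orientation--partition bijection, the coset structure over the cycle space, and the degenerate identifications in $G[V_M]=L(H')$ (distinctness of the $2\ell$ edges and simplicity of $H'$).
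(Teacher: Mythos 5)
Your proposal is correct, but it proves the lemma by a genuinely different route than the paper. The paper invokes the alternating-cycle characterization of uniquely restricted matchings (Proposition~\ref{defurm}) and argues by contradiction: a $K_{1,4}$ formed by two $W_i$'s yields a $4$-clique in $G$ and hence an alternating cycle, and a cycle in $\bigcup_i W_i$ yields an alternating cycle in $G$ via a case analysis (every participating $P_3$ contributes both edges, exactly one edge, or a mixture), with the sufficiency direction simply asserted as easy. You instead work directly from the definition: you observe $G[V_M]=L(H')$ for $H'=\bigcup_i W_i$, identify perfect matchings of $L(H')$ with partitions of $E(H')$ into length-$2$ paths, and then count such partitions via the orientation encoding (orient each path's edges toward its center, pair the in-edges at each vertex), using the fact that even-in-degree orientations differ by an element of the cycle space. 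Your reformulation of the $K_{1,4}$ condition as ``the $W_i$ have pairwise distinct centers'' is also a clean and correct simplification. What your approach buys is rigor and symmetry: both directions of the equivalence fall out of a single counting identity, whereas the paper's forest direction rests on an informal case analysis (``the same reasoning extends,'' ``it is always possible to take a detour'') and its converse is unproved; your argument even yields the exact number of perfect matchings of $G[V_M]$ as a sum over valid orientations of products of double factorials. What the paper's approach buys is brevity given the known machinery: once Proposition~\ref{defurm} is available, the $K_{1,4}$ and cycle obstructions are immediate to state, and no auxiliary bijection needs to be verified. The points you flag as routine (the orientation--partition bijection, the coset structure, simplicity of $H'$) are indeed routine and hold as you describe, so the proposal stands as a complete and arguably tighter proof.
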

 
In the above lemma, observe that each $W_i$, for $i \in [\ell]$, corresponds to a matching edge in $L(H)$.
We then prove the following theorem, whose proof, in a nutshell, proceeds as follows: we begin by generating all forests with $2\ell$ edges and then apply a greedy algorithm to filter out the ``relevant" forests. Once we have this collection, we employ the color-coding technique\cite{alon1995color} to efficiently test whether any of these ``relevant" forests appear as a subgraph of $H$, where $G=L(H)$. This entire procedure can be carried out in FPT time.
  \begin{restatable}{theorem}{thmfpt}    
     For every line graph $G$, the \textsc{Uniquely Restricted Matching} problem can be solved in $2^{\mathcal{O}(\ell)} \cdot |V(G)|$ time when parameterized by the solution size $\ell$.
\end{restatable}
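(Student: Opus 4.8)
The plan is to leverage the characterization in \Cref{thm:liner}: finding a urm of size $\ell$ in $G=L(H)$ is equivalent to finding $\ell$ edge-disjoint paths $W_1,\dots,W_\ell$ of length $2$ in $H$ whose union is a forest with no two of the $W_i$ jointly inducing a $K_{1,4}$. The key observation is that such a union $F=\bigcup_{i\in[\ell]} W_i$ is a forest with exactly $2\ell$ edges, and the constraints (being a forest, the edge-disjointness, and the $K_{1,4}$-freeness condition on pairs of paths) are all properties that depend only on the isomorphism type of $F$ together with the way it decomposes into length-$2$ paths. Since a forest with $2\ell$ edges has at most $2\ell$ nonisolated vertices per tree and the number of such forests is bounded by a function of $\ell$ alone, the plan is to enumerate all ``candidate'' forests on $2\ell$ edges, discard those that cannot be validly decomposed into $\ell$ length-$2$ paths satisfying the structural constraints, and then test whether any surviving candidate appears as a subgraph of $H$.

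More concretely, I would proceed in three stages. First, \emph{enumeration}: generate the (bounded) family $\mathcal{F}$ of all forests with at most $4\ell$ vertices and exactly $2\ell$ edges, up to isomorphism; since this family has size bounded by some $g(\ell)$, this is an FPT-time preprocessing step. Second, \emph{filtering}: for each $F\in\mathcal{F}$, run a greedy/exhaustive check to decide whether $F$'s edge set admits a partition into $\ell$ paths of length $2$ such that no two of them form an induced $K_{1,4}$; this check depends only on $F$ and thus costs time bounded by a function of $\ell$. Retain only the ``relevant'' forests that pass. The correctness of this stage relies on \Cref{thm:liner}: $G$ has a urm of size $\ell$ precisely when some relevant forest $F$ embeds as a subgraph of $H$. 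Third, \emph{subgraph detection}: for each retained $F$, use the color-coding technique of Alon, Yuster and Zwick~\cite{alon1995color} to test in time $2^{\mathcal{O}(\ell)}\cdot|V(H)|$ (or nearly linear, after appropriate derandomization via perfect hash families) whether $F$ appears as a subgraph of $H$. Because $F$ has treewidth bounded by a constant (indeed, forests have treewidth $1$) and at most $4\ell$ vertices, color-coding applies directly and yields the claimed running time.

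The running time analysis then combines these: the number of relevant forests is $2^{\mathcal{O}(\ell)}$ (a forest on $O(\ell)$ vertices is described by $2^{O(\ell\log\ell)}$ possibilities in general, so I would need to argue the bound carefully—either accepting a $2^{O(\ell\log\ell)}$ factor absorbed into the multiplicative constant of color-coding, or noting that the total exponent stays $2^{\mathcal{O}(\ell)}$ because color-coding itself contributes a $2^{\mathcal{O}(\ell)}$ factor), the filtering is done once in time depending only on $\ell$, and each color-coding subgraph test runs in $2^{\mathcal{O}(\ell)}\cdot|V(H)|$ time. Since $|V(H)|\leq |V(G)|$ for line graphs (in fact $|V(G)|=|E(H)|$, and $|V(H)|\le 2|E(H)|$ for graphs without isolated vertices), the overall bound becomes $2^{\mathcal{O}(\ell)}\cdot|V(G)|$ as claimed.

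The main obstacle I anticipate is \emph{the filtering step together with the bookkeeping of the $K_{1,4}$ constraint}. Unlike the clean characterizations for acyclic and induced matchings, here the forbidden-$K_{1,4}$ condition is a constraint on \emph{pairs} of the chosen length-$2$ paths rather than on the global shape of $F$ alone, so the same abstract forest $F$ may admit some valid length-$2$-path decompositions and some invalid ones. I would therefore need to be careful that the enumerated objects carry enough information: rather than enumerating bare forests, it is cleaner to enumerate forests \emph{together with a designated decomposition} into $\ell$ labeled length-$2$ paths, verify the pairwise $K_{1,4}$-freeness on that decomposition, and only then color-code the underlying forest. Ensuring that this decomposition-aware enumeration still has size $2^{\mathcal{O}(\ell)}$, and that distinct decompositions do not cause us to miss or double-count embeddings in $H$, is the delicate point that the formal proof must pin down.
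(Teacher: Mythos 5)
Your overall strategy is the same as the paper's: use Lemma~\ref{thm:liner} to reduce the question to finding, as a subgraph of $H$, a forest with $2\ell$ edges that admits a partition into $\ell$ edge-disjoint $P_3$'s with no two forming a $K_{1,4}$; enumerate candidate forests up to isomorphism, filter them, and test each survivor with color-coding~\cite{alon1995color}. However, there is a genuine gap in your running-time analysis, and you flag it yourself without resolving it: the claimed bound $2^{\mathcal{O}(\ell)}\cdot|V(G)|$ requires that the number of non-isomorphic candidate forests be $2^{\mathcal{O}(\ell)}$, and neither of your two fallback arguments works. A factor of $2^{\mathcal{O}(\ell\log\ell)}$ cannot be ``absorbed into the multiplicative constant of color-coding'' (the product would be $2^{\mathcal{O}(\ell\log\ell)}\cdot|V(G)|$, which is not $2^{\mathcal{O}(\ell)}\cdot|V(G)|$), and the fact that each color-coding call costs $2^{\mathcal{O}(\ell)}\cdot|V(H)|$ says nothing about how many calls are made. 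The paper closes exactly this point with two counting facts: by Otter's theorem the number of unlabeled trees on $s$ vertices is $2^{\mathcal{O}(s)}$~\cite{otter1948number}, and the number of integer partitions of $k$ is $p(k)\le e^{3\sqrt{k}}$~\cite{de2009simple}; summing over all relevant $k$ (the paper observes $2\ell+1\le k\le 3\ell$) gives $|\mathcal{F}|\le 2^{\mathcal{O}(\sqrt{k})}\cdot 2^{\mathcal{O}(k)}=2^{\mathcal{O}(\ell)}$ candidates. Without some such argument, your proof yields only $f(\ell)\cdot|V(G)|$ for an unspecified $f$, not the stated theorem.

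Your proposed remedy for the $K_{1,4}$ bookkeeping---enumerating forests \emph{together with} a designated decomposition into $\ell$ labeled length-$2$ paths---is both unnecessary and counterproductive. It is unnecessary because only the \emph{existence} of a valid decomposition matters for a decision problem: if a forest $F$ admitting a valid decomposition embeds in $H$, the image of that decomposition under the embedding witnesses the structure required by Lemma~\ref{thm:liner} (all constraints involved are isomorphism-invariant); conversely, the union $\bigcup_{i\in[\ell]}W_i$ in $H$ is itself a forest carrying a valid decomposition, hence isomorphic to some enumerated forest that passes the filter. So enumerating bare forests and checking, per forest, whether at least one valid decomposition exists---which the paper does with a linear-time greedy peeling from the leaves---is sound and complete, and no double-counting issue can arise. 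It is counterproductive because the number of ways to partition $2\ell$ edges into $\ell$ labeled pairs is $2^{\Theta(\ell\log\ell)}$, so decomposition-aware enumeration would reintroduce precisely the superexponential factor that breaks the $2^{\mathcal{O}(\ell)}$ bound you need.
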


Turning to structural parameters, we next study the problem parameterized by the treewidth of the input graph. Note that this parameter has been recently studied for other variants such as \textsc{Induced Matching}, \textsc{Acyclic Matching}, and $c$-\textsc{Disconnected Matching}, where single-exponential time algorithms are obtained~\cite{lampis2025structural,chaudhary2023matchings}. However, for \textsc{Uniquely Restricted Matching}, the problem is more challenging, and we are only able to obtain a super-exponential time algorithm. In particular, we obtain the following algorithmic result.
  \begin{restatable}{theorem} {treewidththm}\label{thm1}
     \textsc{Uniquely Restricted Matching} can be solved in $\mathcal{O}(2^{\tw^2/2}\cdot |V(G)|)$ time by a deterministic algorithm when parameterized by the treewidth $\tw$ of the input graph $G$.
 \end{restatable}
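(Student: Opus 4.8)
The starting point is the classical characterization of Golumbic et al.~\cite{golumbic2001uniquely}: a matching $M$ is uniquely restricted if and only if $G[V_M]$ contains no $M$-alternating cycle, that is, no cycle whose edges alternate between $M$ and $E(G[V_M])\setminus M$. The plan is therefore to design a dynamic programming scheme over a nice tree decomposition that builds a matching $M$ of maximum size while \emph{forbidding} the creation of any $M$-alternating cycle; computing $\mu_{urm}(G)$ in this way and comparing it against $\ell$ then answers the decision problem. The reason a plain matching DP (which records only which bag vertices are already saturated, giving a $2^{\tw}$ state) does not suffice is that an alternating cycle may enter and leave a bag many times and only close far away in the decomposition, so the state must additionally summarize the \emph{alternating reachability} among the bag vertices inside the already-processed part of the graph.

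Concretely, for a node $t$ with bag $X_t$ I would let the state consist of (i) the partial matching restricted to $X_t$, i.e.\ which vertices of $X_t$ are saturated and how the saturated ones are paired among themselves, together with (ii) an auxiliary graph $R_t$ on the saturated vertices of $X_t$ recording, for each pair $\{x,y\}$, whether the processed subgraph contains an $M$-alternating path between $x$ and $y$, refined by the parity (matched or unmatched) of its two end-edges so that such paths can later be extended while preserving alternation. Since $R_t$ is an arbitrary graph on at most $\tw+1$ vertices, the number of choices for it is at most $2^{\binom{\tw+1}{2}}$, and this $\binom{\tw+1}{2}\approx \tw^2/2$ count of pairs is exactly the source of the $2^{\tw^2/2}$ factor; the saturation/pairing data in (i) contributes only a further $2^{\mathcal{O}(\tw)}$ factor. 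For each reachable state I store the maximum size of a partial matching realizing it.

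The transitions follow the usual leaf / introduce-vertex / introduce-edge / forget / join template. At an introduce-edge node for $uv$ one branches on whether $uv$ enters $M$: taking $uv$ as a matching edge closes an alternating cycle precisely when $R$ already links $u$ and $v$ by an alternating path whose end-edges are both \emph{unmatched}, whereas leaving $uv$ out when both endpoints are saturated (so it becomes a chord of $G[V_M]$) closes an alternating cycle precisely when $R$ links $u$ and $v$ with both end-edges \emph{matched}; such branches are discarded, and otherwise $R$ is updated by inserting the new length-one alternating connection and recomputing the affected reachabilities. The genuinely delicate steps, and what I expect to be the main obstacle, are the forget and join nodes. Before forgetting a vertex $w$ one must perform an alternating-reachability ``transitive closure'' through $w$, recording directly between the surviving bag vertices every alternating path routed through $w$ --- and here the parity bookkeeping is subtle, since a path may pass through $w$ only by combining its unique matching edge with one unmatched edge. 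At a join node one must combine the two children's relations $R^{(1)}$ and $R^{(2)}$ by a simultaneous closure over alternating paths that switch between the two sides, again rejecting exactly the combinations that create a cycle, while adding the two matching sizes without double-counting the shared bag edges. Making these closure operations both correct (capturing \emph{every} alternating path and no spurious one) and confined to a graph on the bag, so that the state space stays $2^{\tw^2/2+\mathcal{O}(\tw)}$, is the crux of the argument.

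Correctness then follows by the standard induction over the tree decomposition, establishing the invariant that a state is reachable with value $s$ if and only if the processed subgraph admits a partial uniquely restricted matching of size $s$ that induces exactly the recorded saturation pattern and alternating-reachability graph on $X_t$; the root value equals $\mu_{urm}(G)$. For the running time, using a nice tree decomposition with $\mathcal{O}(\tw\cdot |V(G)|)$ nodes and processing each node in time $\mathrm{poly}(\tw)$ per state, the dominant contribution is the $2^{\binom{\tw+1}{2}}\approx 2^{\tw^2/2}$ choices for the reachability relation, yielding the claimed $\mathcal{O}(2^{\tw^2/2}\cdot |V(G)|)$ bound.
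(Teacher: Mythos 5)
Your proposal follows essentially the same route as the paper: a dynamic program over a nice tree decomposition whose state couples a saturation labelling of the bag vertices with a symmetric boolean matrix recording alternating-path reachability between pairs of bag vertices, rejecting at introduce-edge nodes exactly the transitions that would close an alternating cycle, with the $2^{\binom{\tw+1}{2}}\approx 2^{\tw^2/2}$ choices of matrix dominating the state count. If anything, your sketch is more explicit than the paper's own proof about the delicate points (end-edge parity of alternating paths, the chord case at introduce-edge nodes, and the reachability-closure updates at forget and join nodes), which the paper treats only cursorily.
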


Finally, we consider the parameterization by the vertex cover number. It is straightforward to observe that the problem admits an exponential kernel under this parameter. Indeed, if $(X, I)$ is a partition of $V(G)$ where $X$ is a vertex cover of bounded size, say $t$ and $I$ is an independent set, then we can assume that no two vertices in $I$ have the property that the neighborhood of one is contained in the neighborhood of other. Otherwise, if there exists some $x,y\in I$ such that $N(x)\subseteq N(y)$, then one can remove $x$ from $G$ without changing the size of an optimal solution of \textsc{Uniquely Restricted Matching}. Additionally, we use the following result, which follows from Sperner’s Theorem and Stirling’s approximation: given a set $X$ of cardinality $N$, and a collection $Y$ of subsets of $X$ such that any two subsets in $Y$ are incomparable\footnote{Two sets $A$ and $B$ are \emph{incomparable} if neither $A \subseteq B$ nor $B \subseteq A$.}, it holds that $|Y| \leq \frac{2^N}{\sqrt{N}}$.

Applying this to our setting, since any two vertices in $I$ are incomparable (with respect to their neighborhoods in $X$), we obtain that $|I| \leq \frac{2^t}{\sqrt{t}}$, where $t = |X|$. Consequently, the total number of vertices in $G$ satisfies $|V(G)| \leq t + \frac{2^t}{\sqrt{t}}$. Thus, \textsc{Uniquely Restricted Matching} admits an exponential kernel when parameterized by the vertex cover number. However, we show that obtaining a polynomial kernel is unlikely unless $\mathsf{NP} \subseteq \mathsf{coNP} \slash \mathsf{poly}$, even by considering the size of the matching as an additional parameter. Formally, we prove the following.

 \begin{restatable}{theorem} {vcthm}
    \textsc{Uniquely Restricted Matching} does not admit a polynomial kernel when parameterized by the vertex cover number plus the size of the matching unless $\mathsf{NP}\subseteq \mathsf{coNP} \slash \mathsf{poly}$.
\end{restatable}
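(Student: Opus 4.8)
The natural tool for ruling out polynomial kernels is the \emph{OR-cross-composition} framework of Bodlaender, Jansen, and Kratsch: if some \NPH language OR-cross-composes into a parameterized problem $Q$, then $Q$ admits no polynomial kernel (nor polynomial compression) unless $\mathsf{NP}\subseteq \mathsf{coNP}\slash\mathsf{poly}$. Such a composition takes $t$ instances $x_1,\dots,x_t$ of the source that are equivalent under a polynomial equivalence relation and produces, in time polynomial in $\sum_i|x_i|$, a single instance $(G^\star,\ell^\star)$ of \textsc{Uniquely Restricted Matching} that is a \yes-instance iff at least one $x_i$ is, and — crucially — such that \emph{both} the vertex cover number of $G^\star$ and the target size $\ell^\star$ are bounded by $\mathrm{poly}(\max_i|x_i|+\log t)$. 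I would cross-compose from \XTC (the tripartite variant \TDM works equally well): given a universe $U$ with $|U|=3q$ and a family of $3$-element subsets, decide whether some $q$ of them partition $U$. As equivalence relation I would declare two well-formed instances equivalent iff they share the same universe $U$ (hence the same $q$); ill-formed instances and instances from different classes are handled by the usual padding, and we may assume $t=2^s$ with $s=\lceil\log_2 t\rceil$.

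The key to keeping the vertex cover small is to make the part of $G^\star$ charged to the cover \emph{independent of $t$}. To this end the cover $X$ consists of: (i) the $3q$ universe vertices $U$, shared by \emph{all} instances; (ii) for each of the at most $\binom{3q}{3}=\mathrm{poly}(q)$ possible $3$-subsets $S$ of $U$, the internal vertices of a small \emph{set-gadget} $D_S$ attached to the three elements of $S$; and (iii) a \emph{selection gadget} $\Sigma$ of size $\mathcal{O}(s)$ encoding an instance index in binary. All remaining vertices — in particular a family of \emph{disabling} vertices that wire each set-gadget $D_S$ to $\Sigma$ according to which instances contain $S$ — are placed in the independent set, so they contribute nothing to the cover. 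This yields $|X|=\mathcal{O}(q^3+\log t)=\mathrm{poly}(\max_i|x_i|+\log t)$. Each $D_S$ is designed so that it can be \emph{activated} — its internal edges covering all three elements of $S$ are taken into the matching — contributing a fixed amount to the solution, and I set $\ell^\star$ to the value that forces exactly $q$ activated, pairwise element-disjoint, \emph{enabled} set-gadgets plus the fixed contribution of $\Sigma$; thus $\ell^\star=\mathcal{O}(q)$ as well.

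The intended correctness statement is that a urm of size $\ell^\star$ in $G^\star$ exists if and only if some instance $x_{i^\star}$ is a \yes-instance of \XTC. In the forward direction $\Sigma$ commits to the index $i^\star$, which (through the disabling vertices) leaves only the set-gadgets $D_S$ with $S\in\mathcal{F}_{i^\star}$ activatable, and $q$ pairwise element-disjoint activated gadgets cover $U$. Here the \emph{uniquely restricted} requirement is not an obstacle but the engine of the reduction: the matched vertices of an activated $D_S$ should induce a subgraph with a \emph{unique} perfect matching exactly when all three elements of $S$ are taken together, so that partial activation of a set-gadget either yields no profit or destroys uniqueness, enforcing the all-or-nothing (atomic) selection that models exact cover. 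Element-disjointness of the chosen sets is then automatic from the matching condition on the shared universe vertices of $U$.

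The main obstacle, and where the real work lies, is the gadget design together with the verification that the \emph{global} induced subgraph has a unique perfect matching. Concretely I must (a) build $D_S$ so that under the urm condition it behaves atomically and admits exactly one perfect matching when activated; (b) wire the selection and disabling vertices so that committing $\Sigma$ to $i^\star$ genuinely blocks every $D_S$ with $S\notin\mathcal{F}_{i^\star}$ — and never itself yields profit — while leaving the cover independent of $t$; and (c) argue that distinct activated gadgets, the vertices of $U$, and $\Sigma$ do not interact in the induced subgraph in any way that introduces a second perfect matching, so that local uniqueness composes into global uniqueness. Establishing (c) against every way a hypothetical large urm could mix contributions across instances is the delicate step; once it is in place, the parameter bounds above complete the cross-composition and hence the theorem.
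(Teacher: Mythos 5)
Your proposal correctly identifies the framework the paper uses---an OR-cross-composition from \textsc{Exact-3-Cover} into \textsc{Uniquely Restricted Matching}, with the equivalence relation grouping instances over the same universe, a shared set of universe vertices, one set gadget per (occurring or possible) $3$-subset so that the cover size is $\mathcal{O}(n^3)$ independent of $t$, and an instance-selector whose bulk lies in the independent set. This is genuinely the same architecture as the paper (the paper uses a star $K_{1,t}$ with the $t$ leaves in the independent set rather than your $\mathcal{O}(\log t)$ binary selector, and creates gadgets only for subsets in $\bigcup_i \mathcal{X}_i$ rather than all $\binom{3q}{3}$, but these are cosmetic differences as far as the parameter bounds go).

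However, there is a genuine gap: the proposal stops exactly where the proof begins, and you say so yourself (``the main obstacle, and where the real work lies, is the gadget design together with the verification\dots''). Items (a), (b), (c) in your last paragraph are not side conditions---they \emph{are} the theorem. The paper's entire technical content consists of: the explicit $14$-vertex set gadget with two sets of interface vertices and the inner $4$-vertex subgraph $Q'_j$; the forward direction (an explicit matching of size $\ell=4|\mathcal{C}|+\tfrac{8n}{3}+1$ together with an edge-by-edge case analysis, via the alternating-cycle characterization of urms, showing no alternating cycle survives); and the converse direction, which is a delicate budget argument---edges are partitioned into five types, each type's contribution to any urm is upper-bounded (Lemmas~\ref{lem:1}--\ref{lem:5}), repeated use of an index forces a ``sad'' gadget (Lemma~\ref{lem:6}), and tightness of the budget forces exactly $\tfrac{n}{3}$ sad gadgets and one selector edge, from which an exact cover is extracted (Lemmas~\ref{lem:7}--\ref{AM-2}). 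Notably, the paper does \emph{not} realize your ``atomic activation'' picture in which only selected gadgets contribute and $\ell^{\star}=\mathcal{O}(q)$: in the paper every gadget contributes to the matching, selected and unselected gadgets contribute different amounts, and the distinction is enforced only globally by the counting. Whether your atomic design and your binary selector can be made to satisfy the urm condition (in particular, that no alternating cycle arises from interactions between the selector, the disabling vertices, and partially matched gadgets) is precisely what remains unproven, so the proposal as written does not establish the theorem.
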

\subsection{Future Directions}
A natural question to explore is whether it is possible to design an algorithm for \textsc{Uniquely Restricted Matching} parameterized by treewidth that runs in time $2^{o(\tw^2)} \cdot |V(G)|^{\mathcal{O}(1)}$. Additionally, an intriguing direction is to investigate the potential of the \textit{clique-width} parameter, which has recently been studied in the context of related problems like \textsc{Induced Matching} and \textsc{Acyclic Matching} \cite{lampis2025structural}. It will be interesting to identify whether the same techniques can be extended to \textsc{Uniquely Restricted Matching}. Another promising area of research involves studying guarantee parameters of the form $\textsf{UB}-\ell$, where $\textsf{UB}$ is an upper bound on the uniquely restricted matching number. This approach has been explored in the context of \textsc{Induced Matching} and \textsc{Acyclic Matching} \cite{DBLP:conf/stacs/Koana23,chaudhary2025parameterized}, and it appears particularly relevant given that the problem is W[1]-hard with respect to the solution size $\ell$. 

\subsection{Organization of the Paper}
In Section~\ref{sec:prelim}, we present the definitions and concepts that will be used throughout the paper. Section~\ref{sec:line} focuses on the \textsc{Uniquely Restricted Matching} problem on line graphs, with solution size as the parameter. In Section~\ref{sec:treewidth}, we analyze the parameterized complexity of the problem with respect to the treewidth of the input graph. Section~\ref{sec:vc} investigates the problem under the vertex cover number as the parameter. \autoref{tablepig2} provides a summary of these results.

\begin{table}[h]
    \centering
    \scalebox{0.85}{
    \begin{tabular}{|l|c|c|}
        \hline
        & \underline{\textbf{Parameter}} &\underline{\textbf{Result}} \\
        
        1. & Solution Size & \textsf{FPT} on line graphs  \\
        
        2. & Treewidth  & \textsf{FPT}  \\
        
        3. & Vertex Cover Number plus the Solution Size & Does not admit a polynomial kernel unless $\mathsf{NP} \subseteq \mathsf{coNP} \slash \mathsf{poly}$ \\
        \hline
    \end{tabular}
    }
    \caption{Overview of parameterized results for \textsc{Uniquely Restricted Matching} presented in this paper. }
    \label{tablepig2}
\end{table}

\section{Preliminaries}\label{sec:prelim}
\subsection{Graph-theoretic Notations and Definitions}\label{GT}
All graphs considered in this paper are simple, undirected, and connected unless stated otherwise. 	Standard graph-theoretic terms not explicitly defined here can be found in \cite{diestel2012graph}. For a graph $G$, let $V(G)$ denote its vertex set, and $E(G)$ denote its edge set. Given a graph $G$ and a subset of edges $E' \subseteq E(G)$, the \emph{edge-induced subgraph} $G[E']$ is the subgraph of $G$ with edge set $E'$ and vertex set consisting of all endpoints of edges in $E'$. Given a matching $M$, a vertex
$v\in V(G)$ is \textit{$M$-saturated} if $v$ is incident on an edge of $M$, that is, $v$ is an end
vertex of some edge of $M$. Given a graph $G$ and a matching $M$, we use
the notation $V_{M}$ to denote the set of $M$-saturated vertices and $G[V_{M}]$ to denote the
subgraph induced by $V_{M}$. A matching that saturates all the vertices of a graph is called a \textit{perfect matching}. If $uv\in M$, then
$v$ is called the \textit{$M$-mate} of $u$ and vice versa. 

Given a graph $G$, an \emph{even cycle} (i.e., a cycle with an even number of edges) in $G$ is said to be an \emph{alternating cycle} with respect to a matching $M$ if every second edge of the cycle belongs to $M$. The following proposition characterizes uniquely restricted matchings in terms of alternating cycles.

\begin{proposition} [\cite{golumbic2001uniquely}]\label{defurm}
Let $G$ be a graph. A matching $M$ in $G$ is uniquely restricted if and only if there is no alternating cycle with respect to $M$ in $G$.
\end{proposition}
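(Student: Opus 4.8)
The plan is to prove both directions by exploiting the classical structure of the symmetric difference of two matchings. First I would record the basic observation that makes the statement well-posed: since $M$ saturates exactly the vertices of $V_M$, the matching $M$ is itself a perfect matching of $G[V_M]$, and so ``$M$ is uniquely restricted'' means precisely that $M$ is the \emph{unique} perfect matching of $G[V_M]$. I would also note that any $M$-alternating cycle $C$ in $G$ automatically lives inside $G[V_M]$: every vertex of $C$ is incident with one of the $M$-edges of $C$ and is therefore $M$-saturated. Hence ``$G$ has no $M$-alternating cycle'' and ``$G[V_M]$ has no $M$-alternating cycle'' are interchangeable, and I may work entirely inside $G[V_M]$.

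For the ``only if'' direction I would argue the contrapositive: if an $M$-alternating cycle exists, then $M$ is not uniquely restricted. Taking such a cycle $C$, I form $M' := M \triangle E(C)$, the symmetric difference of $M$ with the edge set of $C$. Because $C$ is an even cycle whose edges alternate between $M$ and $E(G)\setminus M$, toggling the edges along $C$ gives every vertex of $C$ its complementary $M$-mate while leaving every vertex outside $C$ with its original mate. Thus $M'$ is again a perfect matching of $G[V_M]$, and $M' \neq M$ since the $M$-edges of $C$ are absent from $M'$. This exhibits a second perfect matching of $G[V_M]$, so $M$ is not uniquely restricted.

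For the ``if'' direction I would again argue the contrapositive: assume $M$ is not uniquely restricted, so there is a perfect matching $M'$ of $G[V_M]$ with $M' \neq M$, and consider the edge-induced subgraph $H := G[M \triangle M']$. Every vertex of $H$ has degree at most $2$ (at most one incident edge from each of $M$ and $M'$), so $H$ is a disjoint union of paths and cycles. Moreover, since both $M$ and $M'$ are perfect matchings of $G[V_M]$, every vertex touched by $H$ is matched by both and hence has degree exactly $2$ in $H$; this rules out path endpoints, so $H$ is a disjoint union of cycles. Each such cycle alternates between $M$-edges and $M'$-edges, hence is even and is an $M$-alternating cycle, and at least one cycle exists because $M' \neq M$. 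This yields an $M$-alternating cycle in $G[V_M]\subseteq G$, completing the equivalence.

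The argument is essentially routine, resting on the folklore description of $M \triangle M'$ as a disjoint union of alternating even cycles. The only point I expect to require care is the degree count that forbids paths in $H$: one must invoke explicitly that $M$ and $M'$ saturate exactly the same vertex set $V_M$, so that any vertex lying on an edge of $M \triangle M'$ is matched by both matchings and therefore has degree $2$ in $H$. I would state this step in full rather than leave it tacit, since it is what distinguishes the perfect-matching setting (only cycles) from the general two-matching setting (where alternating paths could also appear).
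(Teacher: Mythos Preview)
Your argument is correct and is the standard symmetric-difference proof of this classical characterization. Note, however, that the paper does not supply its own proof of this proposition: it is stated with a citation to Golumbic, Hirst, and Lewenstein and used as a black box throughout, so there is no in-paper proof to compare against. Your write-up would serve perfectly well as a self-contained justification, and the care you take in arguing that every vertex of $H=G[M\triangle M']$ has degree exactly $2$ (ruling out path components) is exactly the point that deserves to be made explicit.
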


The \emph{line graph} of a graph $H$, denoted by $L(H)$, is constructed by representing each edge of $H$ as a vertex in $L(H)$, and two vertices in $L(H)$ are adjacent if and only if their corresponding edges in $H$ share a common endpoint. A graph $G$ is said to be a \emph{line graph} if there exists a graph $H$ such that $G=L(H)$. 
 A \emph{vertex cover} of a graph $G$ is a subset $S\subseteq V(G)$ such that every edge of $G$ has at least one of its endpoints in $S$, and the size of a smallest vertex cover among all vertex covers of $G$ is known as the \emph{vertex cover number} of $G$. 
 
 Let $G$ be a graph. A \emph{coloring} on a set $X\subseteq V(G)$ is a function $f:X\rightarrow S$, where $S$ is any set. Here, the elements of $S$ are called \emph{colors}. For a coloring $f$ on $X\subseteq V(G)$ and $Y\subseteq X$, we use the notation $f|_{Y}$ to denote the restriction of $f$ to $Y$. For a coloring $f:X(\subseteq V(G)) \rightarrow S$, a vertex $v\in V(G)\setminus X$, and a color
$\alpha\in S$, we define $f_{v\rightarrow \alpha}:X \cup \{v\}\rightarrow S$ as follows:
$$
  f_{v\rightarrow \alpha}(x) = \begin{cases}
        f(x) \ \ \ \ \text{\ if \ $x\in X$,}
        \\
        \alpha \ \ \  \ \ \ \ \    \text{\ if \ $x=v$.}
           
        \end{cases}
$$
More generally, for a coloring $f:X(\subseteq V(G))\rightarrow \{0,1,2\}$, a set $Y\subseteq X$, and a color
$\alpha\in \{0,1,2\}$, we define $f_{Y\rightarrow \alpha}:X \rightarrow \{0,1,2\}$ as follows:

$$
  f_{Y\rightarrow \alpha}(x) = \begin{cases}
        f(x) \ \ \ \ \text{\ if \ $x\notin Y$,}
        \\
        \alpha \ \ \  \ \ \ \ \    \text{\ if \ $x\in Y$.}
           
        \end{cases}
$$

\subsection{Parameterized Complexity}

Standard notions in Parameterized Complexity not explicitly defined here can be found in \cite{cygan2015parameterized}. In the framework of Parameterized Complexity, each instance of a problem $\mathrm{\Pi}$ is associated with a non-negative integer \textit{parameter} $k$. A parameterized problem $\mathrm{\Pi}$ is \textit{fixed-parameter tractable} ($\mathsf{FPT}$) if there is an algorithm that, given an instance $(I,k)$ of $\mathrm{\Pi}$, solves it in time $f(k)\cdot |I|^{\mathcal{O}(1)}$, for some computable function $f(\cdot)$. A parameterized (decision) problem $\mathrm{\Pi}$ admits a \emph{kernel} of size $f(k)$ for some function $f$ that depends only on $k$ if the following is true: There exists an algorithm (called a \emph{kernelization algorithm}) that runs in $(|I|+k)^{\mathcal{O}(1)}$ time and translates any input instance $(I,k)$ of $\mathrm{\Pi}$ into an equivalent instance $(I',k')$ of $\mathrm{\Pi}$ such that the size of $(I',k')$ is bounded by some computable function $f(k)$. If the function $f$ is polynomial in $k$, then the problem is said to admit a \emph{polynomial kernel}. It is well-known that a decidable parameterized problem is \FPT if and only if it admits a kernel \cite{cygan2015parameterized}. To show that a parameterized problem does not admit a polynomial kernel up to reasonable complexity assumptions can be done by making use of the cross-composition technique introduced by Bodlaender, Jansen, and Kratsch \cite{bodlaender2010cross}. To apply the cross-composition technique, we need the following additional definitions. Since we are using an OR-cross-composition, we give definitions tailored for this case.

\begin{definition}[Polynomial Equivalence Relation, \cite{bodlaender2010cross}]\label{def:PolyEquivRel}
An equivalence relation $\mathcal{R}$ on $\mathrm{\Sigma^{*}}$ is a
\emph{polynomial equivalence relation} if the following two conditions hold:
\begin{enumerate}
    \item There is an algorithm that given two strings $x,y \in \mathrm{\Sigma^{*}}$ decides whether $x$ and $y$ belong to the same equivalence class in $(|x|+|y|)^{\mathcal{O}(1)}$ time.
\item For any finite set $S\subseteq \mathrm{\Sigma^{*}}$ the equivalence relation $\mathcal{R}$ partitions the elements of $S$ into at
most $(\max_{x\in S} |x|)^{\mathcal{O}(1)}$ classes.
\end{enumerate}
\end{definition}

\begin{definition} [OR-Cross-composition, \cite{bodlaender2010cross}] 
Let $L\subseteq \mathrm{\Sigma^{*}}$ be a problem and let $Q\subseteq \mathrm{\Sigma^{*}} \times \mathbb{N}$ be a parameterized problem. We say that $L$ \emph{OR-cross-composes} into $Q$ if there is a polynomial equivalence
relation $\mathcal{R}$ and an algorithm which, given $t$ strings $x_1, x_2, \ldots, x_t$ belonging to the same equivalence class of $\mathcal{R}$, computes an instance $(x^{*},k^{*})\in \mathrm{\Sigma^{*}}\times \mathbb{N}$ in time polynomial in $\Sigma_{i=1}^{t}|x_{i}|$
such that:
\begin{enumerate}
\item $(x^{*},k^{*})\in Q \iff$ $x_{i}\in L$ for some $i\in [t]$,
\item $k^{*}$ is bounded by a polynomial in $\max_{i=1}^{t} |x_{i}|+\log t$.
\end{enumerate}
\end{definition}

\begin{proposition} [\cite{bodlaender2010cross}] \label{orcross}
If some problem $L$ is $\mathsf{NP}$-$\mathsf{hard}$ under Karp reduction and there exists an $\emph{OR-cross-composition}$ from $L$ into
some parameterized problem $Q$, then there is no polynomial kernel for $Q$ unless $\mathsf{NP}\subseteq \mathsf{coNP} \slash \mathsf{poly}$.
\end{proposition}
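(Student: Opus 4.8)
The plan is to prove the contrapositive through the \emph{distillation} machinery of Fortnow--Santhanam: assuming that $Q$ admits a polynomial kernel, I would build an OR-distillation of the $\mathsf{NP}$-hard language $L$, and then invoke the complementary-witness theorem, which states that no $\mathsf{NP}$-hard language can admit an OR-distillation unless $\mathsf{NP}\subseteq \mathsf{coNP}\slash\mathsf{poly}$. Recall that an \emph{OR-distillation} of $L$ into an auxiliary language $L'$ is a polynomial-time algorithm that, given $t$ strings $x_1,\dots,x_t$ each of length at most $s$, outputs one string $y$ whose length is bounded by a polynomial in $s$ that is \emph{independent} of $t$, and such that $y\in L'$ exactly when $x_i\in L$ for some $i\in[t]$. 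The entire argument is then a bookkeeping reduction establishing the slogan ``cross-composition $+$ polynomial kernel $=$ OR-distillation'', with the deep content delegated to the complementary-witness theorem.

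Here is the construction. Given $x_1,\dots,x_t$ with $\max_i|x_i|\le s$, I first \textbf{deduplicate}: as there are fewer than $|\Sigma|^{s+1}$ strings of length at most $s$, discarding repeats (which leaves the OR unchanged) reduces to $t\le 2^{\mathcal{O}(s)}$ inputs, so that $\log t=\mathcal{O}(s)$; this is precisely the step that absorbs the $\log t$ slack allowed in the cross-composition into a clean dependence on $s$. Next I \textbf{bucket} the surviving strings by the polynomial equivalence relation $\mathcal{R}$ of Definition~\ref{def:PolyEquivRel}: its first property makes the bucketing computable in time polynomial in $\sum_i|x_i|$, and its second property guarantees at most $s^{\mathcal{O}(1)}$ nonempty buckets. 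To each bucket $S_j$ I apply the \textbf{OR-cross-composition}, obtaining a $Q$-instance $(x_j^{*},k_j^{*})$ with $(x_j^{*},k_j^{*})\in Q$ iff some string of $S_j$ lies in $L$, and with $k_j^{*}$ polynomial in $s+\log t=\mathcal{O}(s)$. Applying the assumed \textbf{polynomial kernel} of $Q$ to each $(x_j^{*},k_j^{*})$ produces an equivalent $Q$-instance of size polynomial in $k_j^{*}$, hence of size $\mathrm{poly}(s)$. Finally I \textbf{combine} the at most $s^{\mathcal{O}(1)}$ kernelized instances into a single string $y$, their encoded disjoint union, and take the target language to be $L'\coloneqq\mathrm{OR}(Q)$, the set of tuples of $Q$-instances containing at least one yes-instance. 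Tracing the logical OR through each step gives $y\in L'$ iff some original $x_i\in L$, while $|y|$ is at most the bucket count $s^{\mathcal{O}(1)}$ times the per-instance bound $\mathrm{poly}(s)$, i.e.\ $\mathrm{poly}(s)$ independent of $t$. This is exactly an OR-distillation of $L$.

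The main obstacle, and the genuinely hard and self-contained ingredient, is the \textbf{complementary-witness (Fortnow--Santhanam) theorem} invoked at the end: an OR-distillation of an $\mathsf{NP}$-hard $L$ forces $\overline{L}\in \mathsf{NP}\slash\mathsf{poly}$, whence $\mathsf{coNP}\subseteq \mathsf{NP}\slash\mathsf{poly}$, which is equivalent to $\mathsf{NP}\subseteq \mathsf{coNP}\slash\mathsf{poly}$. Intuitively, because the short distilled output cannot record \emph{which} of the many inputs was accepted, one can leverage it --- via a hashing/test-set argument together with polynomial advice --- to manufacture short, efficiently verifiable certificates for \textsc{no}-instances of $L$, placing $\overline{L}$ in $\mathsf{NP}\slash\mathsf{poly}$. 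Everything in the middle paragraph is routine; the single point that genuinely requires care is verifying that the parameter bound $k_j^{*}=\mathrm{poly}(s+\log t)$ is what keeps the kernelized output independent of $t$ --- were the cross-composition permitted a parameter growing polynomially in $t$, the kernel could output a string of size $\mathrm{poly}(t)$ and the distillation would collapse. Since $L$ is $\mathsf{NP}$-hard under Karp reductions by hypothesis, the theorem applies and yields the stated collapse, completing the contrapositive.
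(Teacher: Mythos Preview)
The paper does not prove this proposition at all: it is stated with a citation to Bodlaender, Jansen, and Kratsch and used as a black box, so there is no ``paper's own proof'' to compare against. Your sketch is the standard argument behind that cited result---deduplicate to force $\log t=\mathcal{O}(s)$, bucket by the polynomial equivalence relation into $s^{\mathcal{O}(1)}$ classes, cross-compose each class, kernelize each resulting instance to size $\mathrm{poly}(k_j^{*})=\mathrm{poly}(s)$, and package the results as an OR-distillation of $L$, then invoke Fortnow--Santhanam---and it is correct as outlined.
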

\noindent \textbf{Treewidth.} A \emph{rooted tree} $T$ is a tree having a distinguished vertex labeled $r$, called the \emph{root}. For a vertex, $v\in V(T)$, an \emph{$(r, v)$-path} in $T$ is a sequence of distinct vertices starting from $r$ and ending at $v$ such that every two consecutive vertices are connected by an edge in the tree. The \emph{parent} of a vertex $v$ different from $r$ is its
neighbor on the unique $(r, v)$-path in $T$. The other neighbors of $v$ are its \emph{children}. A vertex $u$ is an \emph{ancestor} of $v$ if $u\neq v$ and $u$ belongs on
the unique $(r, v)$-path in $T$. A \emph{descendant} of $v$ is any vertex $u\neq v$ such that $v$ is its ancestor. The subtree rooted at $v$ is the subgraph of $T$ induced by $v$ and its descendants.

\begin{definition} [Tree Decomposition] 
A \emph{tree decomposition} of a graph $G$ is a pair $\mathcal{T}=(\mathbb{T},\{\mathcal{B}_{x}\}_{x\in V(\mathbb{T})})$, where $\mathbb{T}$ is a rooted tree and each $\mathcal{B}_{x},x\in V(\mathbb{T})$, is a subset of $V(G)$ called a \emph{bag}, such that 
\begin{enumerate}
\item [$T.1)$] $\bigcup_{x\in V(\mathbb{T})}\mathcal{B}_{x}=V(G)$,
\item [$T.2)$] for any edge $uv\in E(G)$, there exists a node $x\in V(\mathbb{T})$ such that $u,v\in \mathcal{B}_{x}$, 
\item [$T.3)$] for all $x,y,z\in V(\mathbb{T})$, if $y$ is on the path from $x$ to $z$ in $\mathbb{T}$ then $\mathcal{B}_{x}\cap \mathcal{B}_{z}\subseteq \mathcal{B}_{y}$.
\end{enumerate}
\end{definition}

The \emph{width} of a tree decomposition $\mathcal{T}$ is the size of its largest bag minus one. The \emph{treewidth} of $G$ is the minimum width over all tree decompositions of $G$. We denote the treewidth of a graph by $\tw$.

Dynamic programming algorithms on tree decompositions are often presented on \emph{nice tree decompositions}, which
were introduced by Kloks \cite{kloks}. We refer to the tree decomposition definition given by Kloks as a standard nice tree
decomposition, which is defined as follows: 

\begin{definition} [Nice Tree Decomposition] 
	Given a graph $G$, a tree decomposition 
$\mathcal{T} = (\mathbb{T},\{\mathcal{B}_{x}\}_{x\in V(\mathbb{T})})$ of $G$ is a \emph{nice tree decomposition} if the following hold:
\begin{enumerate}
	\item [$N.1)$] $\mathcal{B}_{r}=\emptyset$, where $r$ is the root of $\mathbb{T}$,
	and $\mathcal{B}_{l}=\emptyset$ for every leaf $l$ of $\mathbb{T}$.
	\item [$N.2)$] Every non-leaf node $x$ of $\mathbb{T}$ is of one of the following types:
	\begin{enumerate}
		\item [$N.2.1)$]
		\noindent \textbf{Introduce vertex node:} $x$ has exactly one child $y$, and $\mathcal{B}_{x}=\mathcal{B}_{y}\cup \{v\}$ where $v\notin \mathcal{B}_{y}$. We say that $v$ is \emph{introduced} at $x$.
		
		\item [$N.2.2)$] \noindent \textbf{Forget vertex node:} $x$ has exactly one child $y$, and $\mathcal{B}_{x}=\mathcal{B}_{y}\setminus \{u\}$ where $u\in \mathcal{B}_{y}$. We say that $u$ is \emph{forgotten} at $x$.
		
		\item [$N.2.3)$] \noindent \textbf{Join node:}  $x$ has exactly two children, $y_{1}$ and $y_{2}$, and $\mathcal{B}_{x}=\mathcal{B}_{y_{1}}=\mathcal{B}_{y_{2}}$.
	\end{enumerate}	
	\end{enumerate}
\end{definition}

\begin{observation} \label{joinobs}
All the common vertices in the bags of the subtrees of the children of a join node appear in the bag of the join node.
\end{observation}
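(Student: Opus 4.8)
The plan is to derive this observation directly from the consistency condition $T.3$ of tree decompositions, which asserts that for any three nodes $a,c,b$ of $\mathbb{T}$ with $c$ lying on the path from $a$ to $b$, one has $\mathcal{B}_{a}\cap\mathcal{B}_{b}\subseteq\mathcal{B}_{c}$. The whole statement is essentially a specialization of $T.3$ to the join node, so the argument will be short.

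First I would fix the notation. Let $x$ be the join node in question, with children $y_{1}$ and $y_{2}$, and let $T_{1}$ and $T_{2}$ denote the subtrees of $\mathbb{T}$ rooted at $y_{1}$ and $y_{2}$, respectively. Suppose $v\in V(G)$ is a vertex that appears both in some bag of $T_{1}$ and in some bag of $T_{2}$; that is, there exist nodes $a\in V(T_{1})$ and $b\in V(T_{2})$ with $v\in\mathcal{B}_{a}$ and $v\in\mathcal{B}_{b}$. The goal is to show that $v\in\mathcal{B}_{x}$.

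The key step is to observe that the unique path in $\mathbb{T}$ between $a$ and $b$ must pass through the join node $x$. Since $x$ is the common parent of $y_{1}$ and $y_{2}$, and since $T_{1}$ and $T_{2}$ are vertex-disjoint subtrees neither of which contains $x$ (because $x$ is an ancestor of both $y_{1}$ and $y_{2}$, hence $x\notin V(T_{1})\cup V(T_{2})$), the only way to travel in the tree from a node of $T_{1}$ to a node of $T_{2}$ is to ascend from $a$ up through $y_{1}$ to $x$ and then descend through $y_{2}$ to $b$. Thus $x$ lies on the $(a,b)$-path. Applying $T.3$ to the triple $(a,x,b)$ then yields $\mathcal{B}_{a}\cap\mathcal{B}_{b}\subseteq\mathcal{B}_{x}$, and since $v\in\mathcal{B}_{a}\cap\mathcal{B}_{b}$, we conclude that $v\in\mathcal{B}_{x}$, as desired.

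This argument is routine and I do not anticipate any real obstacle; the one point that deserves a careful (if brief) justification is precisely that $x$ lies on the path connecting the two witnessing bags, which is immediate from the rooted-tree structure together with the fact that $x\notin V(T_{1})\cup V(T_{2})$. I would also remark that the nice-tree-decomposition equality $\mathcal{B}_{x}=\mathcal{B}_{y_{1}}=\mathcal{B}_{y_{2}}$ is not needed: the statement holds at the join node of an arbitrary tree decomposition, being nothing more than an instance of the consistency property $T.3$.
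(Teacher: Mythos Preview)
Your argument is correct: the observation is exactly an instance of the consistency property $T.3$, applied with the join node $x$ sitting on the tree path between any node of the subtree rooted at $y_1$ and any node of the subtree rooted at $y_2$. The paper does not supply a proof at all---it simply states the observation---so your write-up is precisely the routine justification one would give if asked to spell it out.
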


For our problems, we want the standard nice tree decomposition to satisfy an additional property, and that is, among the vertices present in the bag of a join node, no edges have been introduced yet.  To achieve this, we use another known
type of node, an \emph{introduce edge node}, which is defined as follows:\\

	 \noindent \textbf{Introduce edge node:} $x$ has exactly one child $y$, and $x$ is labeled with an edge $uv\in E(G)$ such that $u,v\in \mathcal{B}_{x}$ and $\mathcal{B}_{x}=\mathcal{B}_{y}$. We say that $uv$ is \emph{introduced} at $x$.

The use of introduce edge nodes enables us to add edges one by one in our nice tree decomposition. We also need our nice tree decomposition to satisfy the following additional property. 

\begin{definition}[Deferred Edge Property]
A nice tree decomposition that includes introduce edge nodes is said to exhibit the \emph{deferred edge} property if for every edge \( uv \in E(G) \), the corresponding introduce edge bag is placed as late as possible in the decomposition.  Specifically, if \( t(v) \) and \( t(u) \) are the highest nodes where \( v \) and \( u \) respectively appear $($with \( t(v) \) being an ancestor of \( t(u) \)$)$, then the introduce edge bag for \( uv \) is positioned at any node between \( t(u) \) and its parent $($which forgets \( u \)$)$, ensuring that edges are introduced at the latest possible stage.
\end{definition}

\noindent\textbf{Note:} In the above definition, suppose vertex $u$ has incident edges $uv$ and $uw$, with $t(u)$ appearing before both $t(v)$ and $t(w)$. When adding the edge node for $uv$, it must be placed between $t(u)$ and its parent, say $p(t(u))$. Later, the edge node for $uw$ can be inserted either between $t(u)$ and the $uv$ node, or between the $uv$ node and $p(t(u))$. Moreover, we make sure that every edge is introduced exactly once. 

Now, consider the following observation.

\begin{observation}[\cite{cygan2015parameterized,kloks}]\label{obs:deferred}
A standard nice tree decomposition of a graph \( G \) can be transformed into a variant satisfying the \emph{deferred edge} property in \( \tw^{\mathcal{O}(1)} \cdot |V(G)| \) time, without asymptotically increasing the number of nodes.

\end{observation}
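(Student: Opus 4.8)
The plan is to start from a standard nice tree decomposition $\mathcal{T} = (\mathbb{T}, \{\mathcal{B}_x\}_{x \in V(\mathbb{T})})$ of $G$ of width $\tw$, which can be assumed to have $\mathcal{O}(\tw \cdot |V(G)|)$ nodes, and to massage it into one equipped with introduce edge nodes that places each edge as late as possible, immediately before the forget node of one of its endpoints. First I would, for every vertex $u \in V(G)$, locate the node $t(u)$ defined as the highest (closest to the root) bag containing $u$; by axiom $T.3$ the bags containing $u$ form a connected subtree, so $t(u)$ is well defined, and since the root bag is empty by $N.1$, the node $t(u)$ always has a parent whose bag excludes $u$. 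As a join node keeps the bag unchanged and an introduce node enlarges it, this parent must be the forget node that forgets $u$. Because each forget node forgets exactly one vertex, the map $u \mapsto t(u)$ is injective, so distinct vertices have distinct top nodes.

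The core of the correctness argument is a structural claim about where an edge may legally be introduced. Fix an edge $uv \in E(G)$. By axiom $T.2$ there is a node $w$ with $u, v \in \mathcal{B}_w$; since $w$ lies in both the subtree of bags containing $u$ and the subtree of bags containing $v$, it is a descendant-or-equal of both $t(u)$ and $t(v)$. Hence $t(u)$ and $t(v)$ are comparable in $\mathbb{T}$ and, being distinct, one is a strict ancestor of the other; assume $t(v)$ is the ancestor, so that $u$ is the \emph{deeper} endpoint. Applying axiom $T.3$ to the triple $t(v), t(u), w$, which all lie on one root-to-$w$ path with $v$ present at the two ends $t(v)$ and $w$, yields $v \in \mathcal{B}_{t(u)}$. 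Therefore $t(u)$ is the latest node whose bag contains both $u$ and $v$, and I would introduce $uv$ by inserting a new introduce edge node, labeled $uv$ and carrying the bag $\mathcal{B}_{t(u)}$, on the tree edge between $t(u)$ and its parent. When several edges $uv_1, \ldots, uv_d$ share the same deeper endpoint $u$, I chain their introduce edge nodes consecutively in this segment, introducing each edge exactly once, precisely as described in the Note preceding the statement.

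It then remains to verify validity, the deferred edge property, and the cost. Validity is immediate: every inserted node merely repeats the bag $\mathcal{B}_{t(u)}$, so axioms $T.1$--$T.3$ and the node-type conditions are preserved, and each edge is introduced at a node whose bag contains both its endpoints. The deferred edge property holds by construction, since the introduce edge node for $uv$ sits between $t(u)$ and the forget node $p(t(u))$ of the deeper endpoint. For the node count, the number of inserted introduce edge nodes equals $|E(G)|$, and as $G$ has treewidth $\tw$ we have $|E(G)| \leq \tw \cdot |V(G)|$; thus the decomposition grows by only $\mathcal{O}(\tw \cdot |V(G)|)$ nodes and stays of the same asymptotic size as the original. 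The values $t(u)$ are obtained from a single traversal of $\mathbb{T}$, after which each edge is placed in $\tw^{\mathcal{O}(1)}$ time, giving the claimed $\tw^{\mathcal{O}(1)} \cdot |V(G)|$ running time.

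I expect the main obstacle to be the structural claim of the second paragraph: establishing, purely from the three tree-decomposition axioms, that $t(u)$ and $t(v)$ are comparable and that the deeper endpoint's topmost bag already contains the other endpoint. This is what guarantees that introducing $uv$ just below the forget node of its deeper endpoint is both legal and genuinely ``as late as possible.'' Everything else---the bag-repeating insertions, the validity checks, and the counting via $|E(G)| = \mathcal{O}(\tw \cdot |V(G)|)$---is routine bookkeeping once this claim is in place.
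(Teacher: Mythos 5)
Your proposal is correct, and it is essentially the argument that the paper implicitly relies on: the paper states this observation with a citation to Kloks and Cygan et al.\ and gives no proof, and your construction (locate the topmost bag $t(u)$ of each vertex, observe its parent is the unique forget node for $u$, show via axiom $T.3$ and connectedness that the deeper endpoint's topmost bag already contains the other endpoint, then splice bag-repeating introduce edge nodes just below $t(u)$) is exactly the standard construction those references encode. The key structural claim you isolate---comparability of $t(u)$ and $t(v)$ and $v \in \mathcal{B}_{t(u)}$---is established correctly, and the size/time accounting via $|E(G)| \leq \tw \cdot |V(G)|$ is sound, so nothing is missing.
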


For each node $x$ of the tree decomposition, let $V_{x}$ be the union of all the bags present in the subtree of $\mathbb{T}$ rooted at $x$, including $\mathcal{B}_{x}$.
For each node $x$ of the tree decomposition, define the subgraph $G_{x}$ of $G$ as follows:

$G_{x}=(V_{x},E_{x}=\{e: e$ is introduced in the subtree of $\mathbb{T}$ rooted at $x\})$.

\section{Parameter: Solution Size for Line Graphs} \label{sec:line}
In this section, we prove that \textsc{Uniquely Restricted Matching} admits a fixed-parameter tractable (FPT) algorithm on line graphs when parameterized by the \emph{solution size}. In order to prove that, we need the following lemma.
\liner*
 \begin{proof}
We will prove the necessity part as the sufficiency part is easy to see by the definition of line graphs and urm. Let $M=\{e_{1},\ldots, e_{\ell}\}$  be a urm in $G$ size $\ell$. Observe that each matched edge in $G$ corresponds to a path of length 2 (i.e., a $P_3$) in $H$, where the two endpoints of the matched edge in $G$ represent two adjacent edges of this $P_3$ in $H$. The corresponding $P_3$'s in $H$, derived from the edges in $M$, must be edge-disjoint; otherwise, $M$ would not be a matching in $G$.

Now consider a pair of edge-disjoint $P_3$'s in $H$, say $W_i$ and $W_j$, that together form an edge-induced subgraph isomorphic to $K_{1,4}$. In the line graph $G$, the four edges involved in $W_i \cup W_j$ correspond to a clique on four vertices. This implies that the two matched edges $e_i$ and $e_j$ (from $M$) lie within a 4-clique in $G$, which is not possible as it would induce an alternating cycle, contradicting the definition of a urm due to Proposition \ref{defurm}.

Finally, we show that $\bigcup_{i \in [\ell]} W_i$ forms a forest in $H$. Suppose, for contradiction, that $\bigcup_{i\in [\ell]}{W_i}$ contains a cycle, say $C$ in $H$. Then, for any $W_i$ participating in $C$, the two edges of it can interact with the cycle in at most two ways: either both edges of $W_i$ lie on the cycle, or exactly one edge lies on the cycle.

We proceed by a case analysis to derive a contradiction to the assumption that $M$ is a urm in $G$.

First, consider the case where every $W_i$ participating in $C$ has both its edges included in the cycle. Take the smallest case that involves exactly two such $W_i$'s, say $W_i$ and $W_j$ in $C$. In this scenario, an alternating cycle will be formed in $G$. See Fig.~\ref{fig1}(a) for the illustration. Observe that the same reasoning extends to cases where more than two $P_3$'s are involved in 
$C$, with each $P_3$ contributing both of its edges to the cycle.

\begin{figure}[ht]
 \centering
    \includegraphics[scale=0.7]{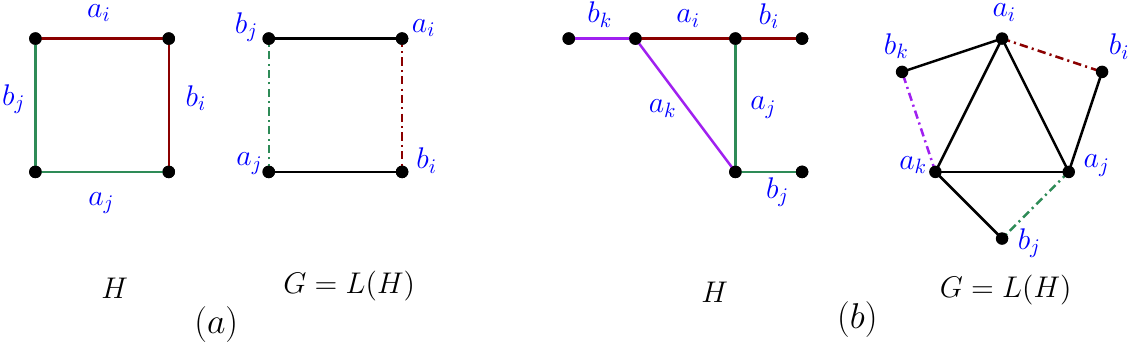}
    \caption{The dotted edges represent matching edges. Here, $W_{i}=a_{i}b_{i}$, $W_{j}=a_{j}b_{j}$, and $W_{k}=a_{k}b_{k}$. Note that the edges of $H$ correspond to vertices in $G$.}
    \label{fig1}
\end{figure}

Next, consider the case where each participating $W_i$ contributes exactly one edge to the cycle. Again, we focus on the smallest instance involving exactly three $P_3$'s. Here again, an alternating cycle will be formed in $G$. See Fig.~\ref{fig1}(b) for the illustration. Observe that the same reasoning extends to cases where more than three $P_3$'s are involved in 
$C$, with each $P_3$ contributing exactly one of its edges to the cycle. Also, observe that the restriction preventing any two distinct paths $W_i$ and $W_j$ (for $i \neq j$) from together inducing a $K_{1,4}$ ensures that the resulting cycle has a very specific and constrained structure.

Next, consider the more general case where the $P_3$'s participating in $C$ contribute in a mixed manner, with some contributing both edges and others exactly one edge. Note that the specific arrangement of paths does not affect the outcome. This is because of the properties of line graphs: if a path $W_i$ (a $P_3$) contributes only one edge to the cycle $C$, then the vertex corresponding to its other edge (which is not part of $C$) is always adjacent to a vertex from another $P_3$ in $C$ that shares a common edge with the participating edge of $W_i$. For example, in Fig.~\ref{fig6}, in $G$, the vertex $b_i$ is adjacent to $a_j$. In $H$, the edge $a_j$ is part of another $P_3$ that participates in $C$ and is adjacent to $a_i$. Thus, there is an edge in $G$ between the vertices corresponding to $b_i$ and $a_j$. Consequently, it is always possible to take a detour and construct an alternating cycle in $G$ using the edges that appear in $H$ due to the presence of such adjacency. Therefore, we conclude that $\bigcup_{i \in [\ell]} W_i$ must form a forest in $G$.
\end{proof}
\begin{figure}[t]
 \centering
    \includegraphics[scale=0.7]{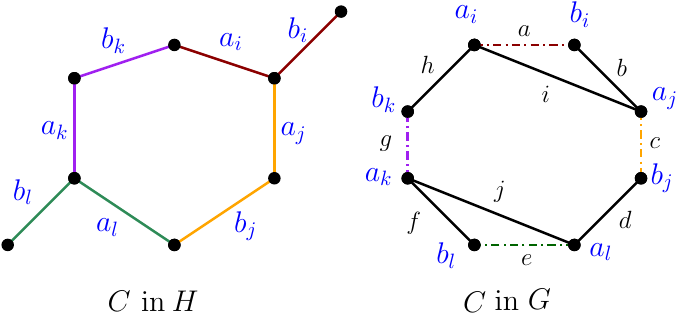}
    \caption{The cycle $C=a_i,a_j,b_j,a_l,a_k,b_k,a_i$ in $H$ corresponds to the cycle $i,c,d,j,g,h,i$ in $G$. The dotted edges represent matching edges. The detour in $G$, given by $a,b,c,d,e,f,g,h,a$, is made possible due to the presence of edges $b$ and $f$.}
    \label{fig6}
\end{figure}




Now, we recall the following definition: a \emph{partition} of a positive integer $n$ is a multiset of positive integers whose sum is $n$. The number of such partitions is denoted by $p(n)$. It is a well-known result that for every positive integer $n$, the number of partitions satisfies $p(n) \leq e^{3\sqrt{n}}$~\cite{de2009simple}.
 
 We are now ready to prove the main theorem of this section.

 \thmfpt*

 \begin{proof}
  Fix the line graph $G$ and integer $\ell$ as an instance of \textsc{Uniquely Restricted
Matching} problem. Let $H$ be a graph without isolated vertices such that $G = L(H)$. By \autoref{thm:liner}, $G$ has a urm of size $\ell$ if and only if $H$ contains $\ell$ edge-disjoint paths $W_1, \ldots, W_\ell$, each of length $2$ such that $\bigcup_{i\in [\ell]}{W_i}$ is a forest, and no two distinct paths $W_i$ and $W_j$ (for $i\neq j$) together induce an edge-induced subgraph isomorphic to $K_{1,4}$. Let $\mathcal{F}$ denote the set of all such forests.

Thus, the task is to generate all forests with $2\ell$ edges and, for each such forest, verify whether it satisfies the desired structure, all within FPT time. After that, it remains to check whether such a desired forest appears in $H$ as a subgraph or not. For this purpose, we refer to a result by Alon et al.~\cite{alon1995color}, who applied the color-coding technique to demonstrate that the problem of finding a given forest $F$ with $k$ vertices as a subgraph (not necessarily an induced subgraph) in an arbitrary graph $G$ is fixed-parameter tractable with respect to the parameter $k$. The result states that if $F$ is a forest on $k$ vertices and $G$ is a graph, then a subgraph of $G$ isomorphic to $F$, if one exists, can be found in $2^{\mathcal{O}(k)} \cdot |V(G)| \cdot \log |V(G)|$ time.

Note that the number of vertices in a forest with exactly $2\ell$ edges such that it has $\ell$ edge-disjoint $P_{3}$'s lies between $2\ell+1$ and $3\ell$.

Next, for each fixed $k$ satisfying $2\ell + 1 \leq k \leq 3\ell$, we consider all integer partitions of $k$. For each partition $(s_1, s_2, \ldots, s_t)$, we construct a forest $F = T_1 \cup \ldots \cup T_t$, where each $T_i$ is a tree with $s_i$ vertices. That is, $\sum_{i=1}^{t} \text{size}(T_i) = k$, and $\text{size}(T_i) = s_i$.

For a fixed tree size $s_i$, the number of possible unlabeled trees on $s_i$ vertices is at most $2^{\mathcal{O}(s_i)}$~\cite{otter1948number}. Therefore, for a single partition of $k$, the total number of distinct unlabeled forests we can construct is at most
$\prod_{i=1}^{t} 2^{\mathcal{O}(s_i)} = 2^{\mathcal{O}(\sum_{i=1}^{t} s_i)} = 2^{\mathcal{O}(k)}.$

Since the number of integer partitions of $k$ is at most $p(k)\leq e^{3\sqrt{k}}\leq 2^{\mathcal{O}(\sqrt{k})}$, the total number of candidate forests over all such $k$ is $|\mathcal{F}| \leq 2^{\mathcal{O}(\sqrt{k})} \cdot 2^{\mathcal{O}(k)} = 2^{\mathcal{O}(k)}.$ 

Now, for a given forest $F\in \mathcal{F}$, we can check in linear time whether it satisfies the required properties: that it has $\ell$ edge-disjoint paths $W_1, \ldots, W_\ell$, each of length $2$ such that $\bigcup_{i\in [\ell]}{W_i}$ is a forest, and no two distinct paths $W_i$ and $W_j$ (for $i\neq j$) together induce an edge-induced subgraph isomorphic to $K_{1,4}$ by the following greedy procedure. 
     
     We begin by considering a rooted tree from the forest (we will do it for every tree in the forest) and traversing it starting from an arbitrary leaf vertex of the tree. If the parent of this leaf (referred to as the co-leaf) has three or more pendant children, an edge-induced subgraph isomorphic to $K_{1,4}$ is formed, and we reject this forest. Otherwise, it follows that each co-leaf has either one or two pendant children. In either case, the co-leaf must serve as the middle vertex of a $P_3$. If the co-leaf has two children, these two edges form a $P_3$, which are then removed, and the process is repeated on the resulting reduced forest. If the co-leaf has one child, we attempt to identify a $P_3$ by traversing from the leaf to the co-leaf and then to the parent of the co-leaf. Once such a $P_3$ is found, we remove its edges and repeat the process; otherwise, we reject the tree.

 Let $\mathcal{F'} \subseteq \mathcal{F}$ denote the subset of forests that were accepted by the greedy algorithm described above. We now check for each forest in $\mathcal{F'}$ whether it appears as a subgraph in $H$ by using the color-coding technique of Alon et al~\cite{alon1995color}. If we find at least one such forest, we return YES, indicating that $G$ has a urm of size $\ell$ by Lemma~\ref{thm:liner}. Otherwise, if no forest in $\mathcal{F'}$ is found as a subgraph in $H$, we return NO, implying that $G$ does not contain a urm of size at least $\ell$. 

   Observe that since $k = \mathcal{O}(\ell)$ and every step can be performed within the claimed FPT time, this completes the proof of the theorem.
 \end{proof}

\section{Parameter: Treewidth} \label{sec:treewidth}

 In this section, we present an $\mathcal{O}(2^{\tw^2/2}\cdot |V(G)|)$ time algorithm for \textsc{Uniquely Restricted Matching}. Given an input graph $G$ and an integer $k$, it is known that in time $2^{\mathcal{O}(k)}\cdot |V(G)|$, we can either produce a tree decomposition of $G$ of width at most $2k+1$, or determine that the treewidth of $G$ exceeds $k$ \cite{korhonen2023single}. Therefore, by using a polynomial-time approximation algorithm, we can efficiently compute a tree decomposition, which can then be transformed into a nice tree decomposition
$\mathcal{T} = (\mathbb{T},\{\mathcal{B}_{x}\}_{x\in V(\mathbb{T})})$ of the graph $G$ and width $\tw$, satisfying the \textit{deferred edge} property. For this purpose, we introduce the following notations.

We use distinct labels, referred to as colors, to represent the possible states of a vertex in a bag $\mathcal{B}_{x}$ of $\mathcal{T}$ with respect to a matching $M$ as follows: 
\begin{itemize}
\item  \textbf{0}: A vertex colored 0 is not saturated by $M$.

\item  \textbf{1}: A vertex colored $1$ is saturated by $M$, and the edge between the vertex and its $M$-mate has also been introduced in $G_{x}$.

\item \textbf{2}: A vertex colored $2$ is saturated by $M$, and either its $M$-mate has not yet been introduced in $G_{x}$, or the edge between the vertex and its $M$-mate has not yet been introduced in $G_{x}$ (in the latter case, note that the $M$-mate also belongs to $\mathcal{B}_{x}$).
\end{itemize}

Now, we define the following notion.

\begin{definition}[Valid Coloring] \label{im1}
Given a node $x$ of $\mathbb{T}$, a coloring $f:\mathcal{B}_{x}\rightarrow\{0,1,2\}$ is \emph{valid} if there exists a coloring $\hat{f}:V_{x}\rightarrow\{0,1,2\}$ in $G_{x}$, called a \emph{valid extension} of $f$, such that the following hold: 
\begin{enumerate}
\item  $\hat{f}$ restricted to $\mathcal{B}_{x}$ is exactly $f$.
\item  Vertices colored $2$ under $\hat{f}$ must belong to $\mathcal{B}_{x}$.
\item Vertices colored 1 under $\hat{f}$ admit exactly one perfect matching. 

\end{enumerate}
\end{definition}

Note that since the vertices colored 1 under a valid coloring $f$ admit exactly one perfect matching under some valid extension of $f$, we can associate each such vertex with a unique $M$-mate, effectively forming virtual matched pairs. Now, we define a path to be an \emph{alternating path} if both its endpoints are either assigned colored 1 or both are colored 2, and there exists a path between them in $V_x$ such that all internal vertices are colored 1. Moreover, we require that the path alternates between matched and unmatched edges in $V_x$. See Fig.~\ref{fig2} for an illustration.

\begin{figure}[h]
 \centering
    \includegraphics[scale=0.7]{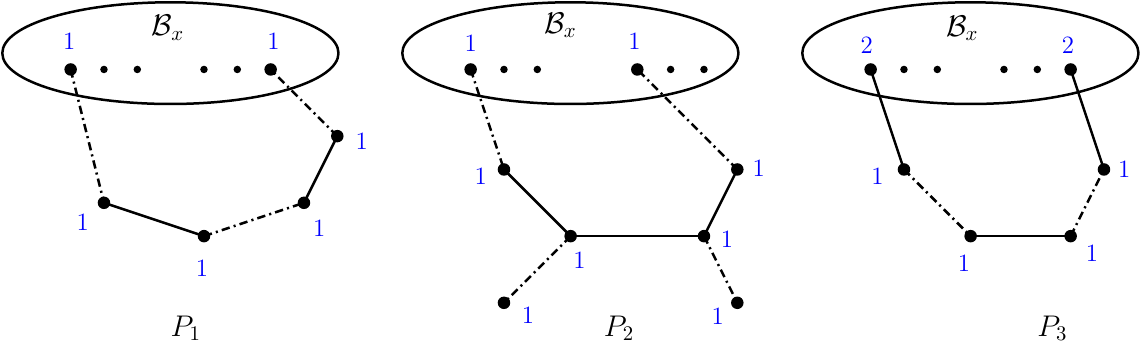}
    \caption{The dotted edges represent matching edges. Note that $P_1$ and $P_3$ are alternating paths, whereas $P_2$ is not.}
    \label{fig2}
\end{figure}

For a fixed bag $\mathcal{B}_{x}$, we define $P$ as a square matrix of size $|\mathcal{B}_{x}|\times |\mathcal{B}_{x}|$ with entries from $\{0,1\}$, where both the rows and columns correspond to elements of $\mathcal{B}_{x}$. We refer to such a matrix as an \emph{alternating matrix}. Intuitively, the alternating matrix $P$ encodes information about the existence of alternating paths between pairs of vertices in $\mathcal{B}_{x}$ within the graph $V_x$. To maintain consistency, we restrict our attention to symmetric matrices with diagonal elements fixed at zero.

 For a fixed bag $\mathcal{B}_{x}$, we say that a coloring function $f:\mathcal{B}_{x}\rightarrow\{0,1,2\}$ and an alternating matrix $P:\mathcal{B}_{x}\times \mathcal{B}_{x}\rightarrow \{0,1\}$ are \emph{compatible} if:
\begin{itemize}
  
    \item If $f(u)=0$ for a vertex $u\in \mathcal{B}_{x}$, then all entries in $P$ corresponding to $u$ must be set to $0$.
    \item An entry $P(u,v)=1$ indicates the existence of an alternating path between vertices $u$ and $v$ in the graph $G_x$ whereas $P(u,v)=0$ signifies that no such alternating path exists. 
\end{itemize}

We have a table $\mathcal{M}$ with an entry $\mathcal{M}_{x}[f,P]$ for each node $x$ of $\mathbb{T}$, for every coloring $f:\mathcal{B}_{x}\rightarrow\{0,1,2\}$, and for every alternating matrix $P$. The number of possible choices for $x$ is at most $\mathcal{O}(\tw\cdot n)$, the number of possible colorings $f$ is at most $3^{\tw}$, and since we consider symmetric matrices with 0's on the diagonal, and the number of possible matrices $P$ is at most $\mathcal{O}(2^{\frac{\mathsf{tw}^{2}}{2}})$. Consequently, the total size of the table $\mathcal{M}$ is bounded by $\mathcal{O}(2^{\mathsf{tw}^{2}}\cdot 
|V(G)|)$. The following definition specifies the value stored in each entry $\mathcal{M}_{x}[f,P]$ of $\mathcal{M}$.
\begin{definition}
If $f$ is valid, the entry $\mathcal{M}_{x}[f,P]$ stores the maximum number of vertices colored $1$ or $2$ under some valid extension $\hat{f}$ of $f$ in $G_{x}$, ensuring that $f$ remains compatible with $P$. Otherwise, the entry $\mathcal{M}_{x}[f,P]$ stores the value $-\infty$ and marks $f$ as \emph{invalid}.
\end{definition}

Since the root of $\mathbb{T}$ is an empty node, note that the maximum number of vertices saturated by any uniquely restricted matching is exactly $\mathcal{M}_{r}[\emptyset,\emptyset]$, where $r$ is the root of the decomposition $\mathcal{T}$. 

Next, we present recursive formulas for computing the entries of the table $\mathcal{M}$. 
\bigskip

\noindent \textbf{Leaf node:} For a leaf node $x$, we have that $\mathcal{B}_{x}=\emptyset$. Hence, there is only one possible coloring on $\mathcal{B}_{x}$, that is, the empty coloring\footnote{A coloring defined on an empty set is an \emph{empty coloring.}}. Also, since $\mathcal{B}_{x}=\emptyset$, the only possible alternating matrix is $P=\emptyset$.
Thus, we have $\mathcal{M}_{x}[\emptyset,\emptyset]=0$. 

\medskip
    
\noindent \textbf{Introduce vertex node:} Suppose that $x$ is an introduce vertex node with child node $y$ such that $\mathcal{B}_{x}=\mathcal{B}_{y}\cup \{v\}$ for some $v\notin \mathcal{B}_{y}$. Note that we have not introduced any edges incident on $v$ so far, so $v$ is isolated in $G_x$. Furthermore, for any subset $Y\subseteq X$, define the restriction of $P$ to $Y$, denoted as $P|_Y$, as the matrix obtained by removing the rows and columns of $P$ that are not in $Y$. Now, for every coloring $f: \mathcal{B}_{x}\rightarrow \{0,1,2\}$ and for every alternating matrix $P:\mathcal{B}_{x}\times \mathcal{B}_{x} \rightarrow \{0,1\}$, we have the following recursive formula.

\begin{equation*}
  \mathcal{M}_{x}[f,P] = \begin{cases}
        \mathcal{M}_{y}[f|_{\mathcal{B}_{y}},P|_{\mathcal{B}_{y}}] \ \  \ \ \ \ \ \ \text{\ if \ $f(v)=0$,}
        \\
        -\infty \ \ \  \ \ \ \ \ \ \ \ \ \ \ \ \ \ \ \ \ \ \ \  \text{\ if \ $f(v)=1$,}
            \\
            \mathcal{M}_{y}[f|_{\mathcal{B}_{y}},P|_{\mathcal{B}_{y}}]+1 \ \ \ \text{\ if \ $f(v)=2$}.
        \end{cases}
 \end{equation*}

\bigskip

Note that when $f(v)=1$, then $f$ is invalid as $v$ does not have any neighbor in $G_{x}$ (by the definition of a valid coloring, $v$ needs one neighbor of color $1$ in $G_{x}$), and hence $\mathcal{M}_{x}[f,P]=-\infty$. 

Next, when $f(v)=0$ or $f(v)=2$, then $f$ is valid if and only if $f|_{\mathcal{B}_{y}}$ is valid. Moreover, when $f(v)=2$, we increment the value by one as one more vertex has been colored $2$ in $G_{x}$.

Clearly, the evaluation of all introduce vertex nodes can be done in $\mathcal{O}(2^{\tw^2/2} \cdot |V(G)|)$ time.
\bigskip

\noindent \textbf{Introduce edge node:} 
Suppose that $x$ is an introduce edge node that introduces an edge $uv$, and let $y$ be the child of $x$.  For every coloring $f: \mathcal{B}_{x}\rightarrow \{0,1,2\}$ and for every alternating matrix $P:\mathcal{B}_{x}\times \mathcal{B}_{x} \rightarrow \{0,1\}$, we consider the following cases:

If $f(u)=f(v)=1$ and the entry corresponding  to $\{u,v\}$ in $P$ is 0, then
\begin{equation*}
	\mathcal{M}_{x}[f,P]=
\max \{\mathcal{M}_{y}[f_{\{u,v\}\rightarrow 2},P], \mathcal{M}_{y}[f,P]\}.
\end{equation*}

Else, if $f(u)=f(v)=1$ and the entry corresponding  to $\{u,v\}$ in $P$ is 1, then

 \begin{equation*}\mathcal{M}_{x}[f,P]=-\infty.\end{equation*}

 Else, 
	\begin{equation*}
	\mathcal{M}_{x}[f,P]=
\mathcal{M}_{y}[f,P]. 
\end{equation*}

If either $f(u)$ or $f(v)$ is $0$, then $f$ is valid if and only if $f$ is valid on $\mathcal{B}_{y}$. Next, let us consider the case when both $f(u)$ and $f(v)$ are $1$ and $f$ is valid. In this case, both $u$ and $v$ must be colored $2$ under $f$ in $\mathcal{B}_{y}$ (this follows by the definition of a valid coloring).

Clearly, the evaluation of all introduce edge nodes can be done in $\mathcal{O}(2^{\tw^2/2} \cdot |V(G)|)$ time.
  
\bigskip

\noindent  \textbf{Forget node:} Suppose that $x$ is a forget vertex node with
a child $y$ such that $\mathcal{B}_{x}=\mathcal{B}_{y}\setminus \{u\}$ for some $u\in \mathcal{B}_{y}$. For every coloring $f: \mathcal{B}_{x}\rightarrow \{0,1,2\}$ and for every alternating matrix $P:\mathcal{B}_{x}\times \mathcal{B}_{x} \rightarrow \{0,1\}$, we have
 
\begin{equation} \label{forgetim}\mathcal{M}_{x}[f,P]=
\displaystyle\max\{\mathcal{M}_{y}[f_{u \rightarrow 0},P],\mathcal{M}_{y}[f_{u \rightarrow 1},P]\} .
\end{equation}

 The first term on the right-hand side in (\ref{forgetim}) corresponds to the case when $f(u)=0$ in $\mathcal{B}_{y}$, and the second term corresponds to the case when $f(u)=1$ in $\mathcal{B}_{y}$. Note that the maximum is taken over colorings $f_{u \rightarrow 0}$  and $f_{u \rightarrow 1}$  only, as the coloring $f_{u \rightarrow 2}$ cannot be extended to a valid coloring once $u$ is forgotten. 

Clearly, the evaluation of all forget nodes can be done in $\mathcal{O}(2^{\tw^2/2} \cdot |V(G)|)$ time.

\bigskip
To define the join nodes, consider the following definition.
 \begin{definition} \label{prelimdef}
Given a graph $G$ and a set $X\subseteq V(G)$, two colorings $f_{1},f_{2}: X \rightarrow \{0,1,2\}$ are \emph{$f$-correct} for a coloring $f: X \rightarrow \{0,1,2\}$ if the following conditions hold:
\begin{enumerate}
	\item  $f(v)=0$ if and only if $f_{1}(v)=f_{2}(v)=0$,
	\item  $f(v)=1$ if and only if  $(f_{1}(v),f_{2}(v))\in\{(1,2),(2,1)\}$, and
	\item  $f(v)=2$ if and only if $f_{1}(v)=f_{2}(v)=2$.
\end{enumerate}
\end{definition}

\begin{definition}
    Let $x$ be a join node with children $y_{1}$ and $y_{2}$. For every coloring $f,f_{1},f_{2}: \mathcal{B}_{x}\rightarrow \{0,1,2\}$ and for every alternating matrices $P,P_1,P_2:\mathcal{B}_{x}\times \mathcal{B}_{x} \rightarrow \{0,1\}$, if $f_{1},f_{2}$ are $f$-correct, then $P_{1},P_{2}$ are $P$-\emph{right} for an alternating matrix $P$ if \begin{itemize}
        \item $P$ is compatible with $f$ and $P_{i}$ is compatible with $f_{i}$ for every $i\in [2]$.
        \item If $\{u,v\}$ has their corresponding value 1 in either $P_{1}$ or $P_{2}$, then their corresponding value in $P$ is also 1. Here, $u,v\in \mathcal{B}_{x}$ and $f(u)=f(v)=1$.    \end{itemize}
    
    \end{definition}
\bigskip

\noindent \textbf{Join node:}
Let $x$ be a join node with children $y_{1}$ and $y_{2}$. For every coloring $f: \mathcal{B}_{x}\rightarrow \{0,1,2\}$ and for every alternating matrix $P:\mathcal{B}_{x}\times \mathcal{B}_{x} \rightarrow \{0,1\}$, we have 
\begin{equation*}
\mathcal{M}_{x}[f,P]=\displaystyle\max_{f_{1},f_{2}} \{\mathcal{M}_{y_{1}}[f_{1},P_1]+\mathcal{M}_{y_{2}}[f_{2},P_2]-|f^{-1}(1)|-|f^{-1}(2)| \},
\end{equation*}

where $f_{1}: \mathcal{B}_{y_{1}}\rightarrow \{0,1,2\}$ and $f_{2}: \mathcal{B}_{y_{2}}\rightarrow \{0,1,2\}$ such that $f_{1}$ and $f_{2}$ are $f$-correct, and $f_{i}$ is compatible with $P_{i}$ for every $i\in [2]$. Also, $P_{1}$ and $P_{2}$ are $P$-right.

\begin{remark}
Note that incompatible pairs may arise.
\end{remark}

Note that we can restrict our search space to colorings $f_1$ and $f_2$ that are $f$-correct, as each edge is introduced exactly once and specifically at the latest possible stage. In other words, in a join node, the vertices assigned color 1 form an independent set. To compute the value of $\mathcal{M}_{x}[f,P]$, we look up the corresponding colorings in the child nodes $y_{1}$ and $y_{2}$, add their respective values, and subtract the number of vertices colored $1$ or $2$ under $f$. This subtraction is crucial; without it, the vertices colored $1$ or $2$ in $\mathcal{B}_{x}$ would be double-counted due to Observation \ref{joinobs}.

By the naive method, the evaluation of all join nodes altogether can be done in $\mathcal{O}(2^{\tw^2/2} \cdot |V(G)|)$ time.

 Thus, from the description of all nodes, we have the following theorem.
\treewidththm*


\section{Parameter: Vertex Cover Number} \label{sec:vc}

In this section, we prove that under the assumption that $\mathsf{NP} \nsubseteq \mathsf{coNP} \slash \mathsf{poly}$, there does not exist
a polynomial kernel for \textsc{Uniquely Restricted Matching} when parameterized by the vertex cover number of the input graph plus the size of the matching. For this purpose, we give an OR-cross-composition (see Preliminaries) from the \textsc{Exact-3-Cover} problem (defined below), which is $\mathsf{NP}$-$\mathsf{hard}$ \cite{garey}.\\

\noindent\fbox{ \parbox{160mm}{
		\noindent\underline{\textsc{Exact-3-Cover}}:
		
		\smallskip
		\noindent\textbf{Instance:} A set $\mathcal{U}$ with $|\mathcal{U}|=3c$, where $c\in \mathbb{N}$, and a collection $\mathcal{X}$ of 3-element subsets of $\mathcal{U}$.
		
		\noindent\textbf{Question:} Does there exist a subcollection $\mathcal{X'}\subseteq\mathcal{X}$ such that every element of $\mathcal{U}$ appears in exactly one member of $\mathcal{X'}$?}}
\medskip

 We remark that our reduction is inspired by the reduction given by Chaudhary and Zehavi \cite{chaudhary2025parameterized} to prove that
there is no polynomial kernel for \textsc{Acyclic Matching} when parameterized by the vertex cover number plus the size of the matching of the input graph unless $\mathsf{NP}\subseteq \mathsf{coNP} \slash \mathsf{poly}$. 

Now, consider the following construction.
\subsection{Construction} \label{const:kernel}
Let $\mathcal{I}=\{(\mathcal{U}_{1},\mathcal{X}_{1}),\ldots,(\mathcal{U}_{t},\mathcal{X}_{t})\}$ be a collection of $t$ instances of \textsc{Exact-3-Cover}. Without loss of generality, let $\mathcal{U}_{i}=\mathcal{U}=[n]$ and $|\mathcal{X}_{i}|=m$ for all $i\in[t]$ (note that $n=3c$ for some $c\in \mathbb{N}$). Let $\mathcal{C}=\displaystyle{\bigcup_{i\in[t]}\mathcal{X}_{i}}$, and assume that $\mathcal{X}_{i}\neq \mathcal{X}_{j}$ for every distinct $i,j\in[t]$. Also, we denote $\mathcal{C}$ as $\{c_{1},c_{2},\ldots, c_{|\mathcal{C}|}\}$. Furthermore, we denote by $(G,\ell)$ the instance of \textsc{Uniquely Restricted Matching} that we construct in this section.  First, we introduce the vertex set $\widehat{U}=\displaystyle{\bigcup_{a\in \mathcal{U}}}\{v_{a},v'_{a},u_{a},u'_{a}\}$ and edge set $\displaystyle{\bigcup_{a\in \mathcal{U}}}\{v_{a}v'_{a},v'_{a}u'_{a},u'_{a}u_{a},u_{a}v_{a}\}$ to $G$. 

Next, we define the set gadget as follows.
\medskip
\medskip

\noindent \textbf{Set Gadget:} For each $c_{j}\in \mathcal{C}$, where $c_{j}=\{a,b,c\}$, we introduce the following vertex set $\{p_{j},q_{j},r_{j},s_{j},$ $w_{ja}, w_{jb},w_{jc},w_{j1},w_{j2},w'_{ja},w'_{jb},w'_{jc},w'_{j1},w'_{j2}\}$. We also introduce the following edge set $\{w_{j1}w_{j2},w_{j1}w_{ja},$ $w_{j1}w_{jb},w_{j1}w_{jc},w_{j2}p_{j},w_{ja}p_{j},w_{jb}p_{j},w_{jc}p_{j},w'_{j1}w'_{j2},w'_{j1}w'_{ja},w'_{j1}w'_{jb},w'_{j1}w'_{jc},w'_{j2}s_{j}, $ $w'_{ja}s_{j},$   $w'_{jb}s_{j},$ $w'_{jc}s_{j},p_{j}q_{j},$ $q_{j}r_{j}$ $,s_{j}r_{j},p_{j}r_{j}\}$ to $G$. Let us refer to the set gadget corresponding to $c_{j}\in \mathcal{C}$ as $Q_{j}$. Also, for every $c_{j}\in \mathcal{C}$, let the subgraph of $Q_{j}$ induced by the set $\{p_{j},q_{j},r_{j},s_{j}\}$ be denoted by $Q'_{j}$. See Fig.~\ref{fig3} for an illustration. 

\medskip

\begin{figure}[h]
 \centering
    \includegraphics[scale=0.7]{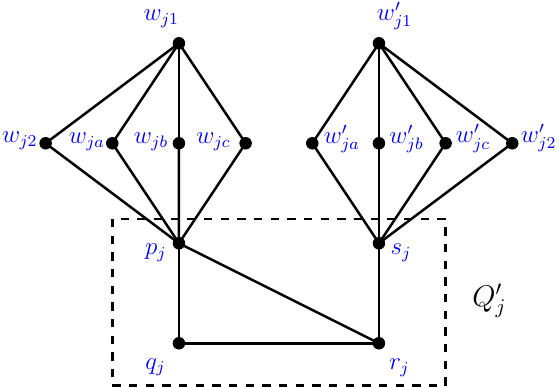}
    \caption{The construction of a set gadget $Q_{j}$ corresponding to $c_{j}=\{a,b,c\}\in  \mathcal{C}$.}
    \label{fig3}
\end{figure}

Furthermore, for every $Q_{j}$, if $c_{j}=\{a,b,c\}$, then we call the vertices $\{w_{ja},w_{jb},w_{jc}\}$ \emph{first set of interface vertices} and call the vertices $\{w'_{ja},w'_{jb},w'_{jc}\}$ \emph{second set of interface vertices}. For every $c_{j}\in \mathcal{C}$, where $c_{j}=\{a,b,c\}$ and $v_{d}\in \widehat{U}$, $w_{ja}v_{d}\in E(G)$ and $w'_{ja}v'_{d}\in E(G)$ if and only if $a=d$. For every distinct pair $c_{j},c_{k}\in \mathcal{C}$, if $a\in c_{j},c_{k}$, then introduce edges $w_{ja}w'_{ka},w'_{ja}w_{ka}$ in $G$. 

Next, we define the instance selector as follows.

\smallskip

\noindent \textbf{Instance Selector:} Introduce a $K_{1,t}$ with $y$ as the central vertex and $x_{i}$, $i\in [t]$, as leaves. Let $X=\{x_{i}:i\in [t]\}$. For every $x_{i}, i\in [t]$ and $c_{j}\in \mathcal{C}\setminus \mathcal{X}_{i}$, do the following:
\begin{itemize}
    \item Introduce edges between $x_{i}$ and the first set of interface vertices of the set gadget $Q_{j}$.
    \item Introduce edges between $x_{i}$ and $w_{j2}$ and $y$ and $w'_{j2}$.
    \item  Introduce edges between $y$ and the second set of interface vertices of the set gadget $Q_{j}$.
\end{itemize}   
See Fig.~\ref{fig4} for an illustration of the construction of $G$. 

\begin{figure}[h]
 \centering
    \includegraphics[scale=0.6]{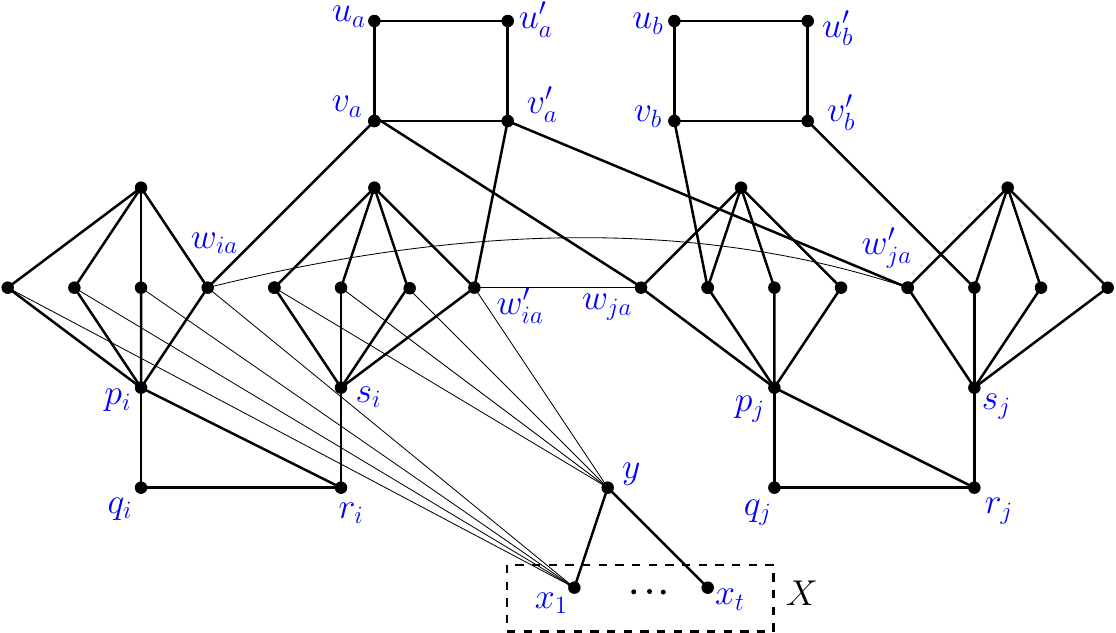}
    \caption{An illustration of the construction of $G$, where, $c_{i}$ contains $a$, $c_{j}$ contains both $a$ and $b$, $c_{i}\notin \mathcal{X}_1$, and $c_{i},c_{j}\in \mathcal{X}_t$. Edges are shown with varying thickness solely to enhance the clarity of the figure.}
    \label{fig4}
\end{figure}

This ends the construction of $G$. Finally, we set $\ell=4|\mathcal{C}|+\frac{8n}{3}+1$.

\subsection{From Exact-3-Cover to Uniquely Restricted Matching}
First, consider the following lemma.
\begin{lemma} \label{sizevc}
Let $G$ be as defined in Subsection \ref{const:kernel}. Then, $G$ has a vertex cover of size $\mathcal{O}(n^{3})$.
\end{lemma}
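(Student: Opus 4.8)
The plan is to exhibit an explicit vertex cover and bound its size directly, rather than reason about the minimum vertex cover. The single most important preliminary observation is that, although the number $t$ of composed instances may be enormous, the set $\mathcal{C}=\bigcup_{i\in[t]}\mathcal{X}_{i}$ is a family of $3$-element subsets of the \emph{fixed} universe $\mathcal{U}=[n]$. Hence $|\mathcal{C}|\le\binom{n}{3}=\mathcal{O}(n^{3})$, a bound independent of $t$. Since each set gadget $Q_{j}$ contributes a constant number (namely $14$) of vertices, the total number of gadget vertices is $14|\mathcal{C}|=\mathcal{O}(n^{3})$; together with the $4n$ vertices of $\widehat{U}$ and the single vertex $y$, this is the set whose removal I will show destroys every edge.

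Concretely, I would take $S=V(G)\setminus X$, where $X=\{x_{1},\ldots,x_{t}\}$ is the set of leaves of the instance selector. Proving that $S$ is a vertex cover amounts to checking that $X$ is an independent set of $G$, i.e.\ that no edge joins two leaves. By the construction, each $x_{i}$ is adjacent only to (i) the central vertex $y$ of the $K_{1,t}$, and (ii) certain first-set interface vertices $w_{ja}$ and the vertices $w_{j2}$ of those set gadgets $Q_{j}$ with $c_{j}\notin\mathcal{X}_{i}$; in particular no two leaves $x_{i},x_{i'}$ are adjacent. Consequently every edge incident to some $x_{i}$ has its other endpoint in $\{y\}\cup\bigcup_{j}V(Q_{j})\subseteq S$, and every remaining edge (the internal edges of $\widehat{U}$ and of the gadgets, the interface edges $w_{ja}v_{d}$ and $w'_{ja}v'_{d}$, the cross-gadget edges $w_{ja}w'_{ka}$, and the edges from $y$ to the $w'_{j2}$ and to the second-set interface vertices) has both endpoints in $S$. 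Hence $S$ covers all of $E(G)$.

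It then remains only to count: $|S|=|V(G)|-t=|\widehat{U}|+\sum_{j}|V(Q_{j})|+1=4n+14|\mathcal{C}|+1$, where the ``$+1$'' accounts for $y$, and substituting $|\mathcal{C}|=\mathcal{O}(n^{3})$ yields $|S|=\mathcal{O}(n^{3})$. I do not expect any genuine difficulty beyond the bookkeeping over edge types. The one conceptual step, and the point on which the entire cross-composition rests, is the realization that the leaves $x_{i}$ are the only vertices whose number scales with $t$, that they form an independent set, and that they may therefore all be excluded from the cover; this is precisely what forces the cover size to depend on $n$ alone and not on $t$, as required for the parameter bound in the forthcoming cross-composition.
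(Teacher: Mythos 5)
Your proposal is correct and follows essentially the same route as the paper's proof: both take $Y=V(G)\setminus X$ as the cover, observe that $X$ is an independent set, count $|Y|=4n+14|\mathcal{C}|+1$, and bound $|\mathcal{C}|\leq\binom{n}{3}$ since the sets in $\mathcal{C}$ are distinct $3$-element subsets of $[n]$. Your write-up merely spells out the independence of $X$ and the edge-type check in more detail than the paper does.
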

\begin{proof}
As $X$ is an independent set of $G$, it can be observed that the set $Y=V(G)\setminus X $ is a vertex cover of $G$. Furthermore, note that $|Y|=4n+14|\mathcal{C}|+1$. Since all the elements in $\mathcal{C}$ are distinct, there can be at most $\binom{n}{3}$ elements in $\mathcal{C}$. Thus, we have $|Y|\in \mathcal{O}(n^{3})$. 
\end{proof}

We remark that by saying that $(\mathcal{U},\mathcal{X})$ admits a solution of $\mathcal{I}$, we mean that there exists a set $\mathcal{X'}\subseteq \mathcal{X}$ such that if $\mathcal{X'}=\{c'_{1},\ldots,c'_{\frac{n}{3}}\}$, then $\bigcup^{\frac{n}{3}}_{i=1}$ $c'_{i}=\mathcal{U}$. Now, consider the following lemma.

\begin{lemma} \label{AM-1}
If $(\mathcal{U},\mathcal{X}_{q})$ admits a solution of $\mathcal{I}$, then $G$ admits a uniquely restricted matching of size $\ell$.
\end{lemma}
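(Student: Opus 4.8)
The plan is to prove Lemma~\ref{AM-1} constructively. Starting from an exact cover $\mathcal{X}' = \{c'_1, \ldots, c'_{n/3}\} \subseteq \mathcal{X}_q$ of the $q$-th instance, I will exhibit an explicit matching $M$ of size exactly $\ell = 4|\mathcal{C}| + \frac{8n}{3} + 1$ in $G$ and then invoke Proposition~\ref{defurm} to certify that $M$ is uniquely restricted by showing it admits no alternating cycle. I define $M$ region by region, following the construction of Subsection~\ref{const:kernel}. First, for the instance selector I add the edge $y x_q$. Second, for every set gadget $Q_j$ with $c_j \notin \mathcal{X}'$ (this includes sets of $\mathcal{X}_q \setminus \mathcal{X}'$ and all of $\mathcal{C} \setminus \mathcal{X}_q$), I add the four internal edges $w_{j1}w_{j2}$, $w'_{j1}w'_{j2}$, $p_j q_j$, $r_j s_j$, leaving all six interface vertices of $Q_j$ unsaturated. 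Third, for every cover gadget $Q_j$ with $c_j = \{a,b,c\} \in \mathcal{X}'$, I route the interface into the universe by adding $w_{ja}v_a, w_{jb}v_b, w_{jc}v_c$ and $w'_{ja}v'_a, w'_{jb}v'_b, w'_{jc}v'_c$, together with $w_{j1}w_{j2}$, $w'_{j1}w'_{j2}$, the single triangle edge $q_j r_j$, and the universe edges $u_a u'_a, u_b u'_b, u_c u'_c$; crucially, I deliberately leave $p_j$ and $s_j$ \emph{unsaturated}.

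For the counting, each non-cover gadget contributes $4$ edges and each of the $\frac{n}{3}$ cover gadgets contributes $12 = 4 + 8$ edges (counting its incident universe edges), so the total is $4(|\mathcal{C}| - \tfrac{n}{3}) + 12\cdot\tfrac{n}{3} + 1 = 4|\mathcal{C}| + \tfrac{8n}{3} + 1 = \ell$. That $M$ is a matching uses \emph{exactness} for the first time: since the members of $\mathcal{X}'$ are pairwise disjoint, distinct cover gadgets claim disjoint universe vertices (each element $a$ lies in a unique $c_j \in \mathcal{X}'$, so $w_{ja}v_a$ and $w'_{ja}v'_a$ are unambiguous), while all remaining edges are internal to pairwise-distinct gadgets or to the selector, and $y,x_q$ meet nothing else selected.

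The main obstacle, and the heart of the proof, is verifying that $M$ has no alternating cycle. I will argue by locating the saturated subgraph and checking it case by case. Within a non-cover gadget the only cycle among saturated vertices is the odd triangle $p_j q_j r_j$, which is never alternating, and all its saturated vertices reach the outside only through unsaturated interface vertices, so no cycle escapes. Within a cover gadget together with its incident universe $4$-cycles, leaving $p_j$ and $s_j$ unsaturated is precisely what severs the triangle from the two $K_{2,4}$'s: it destroys the otherwise-present alternating $8$-cycle on $(p_j, q_j, r_j, s_j, w'_{ja}, v'_a, v_a, w_{ja})$ and its analogues for $b,c$, after which $q_j r_j$ becomes an isolated matched edge and every alternating path entering a universe $4$-cycle dead-ends at $w_{j2}$ or $w'_{j2}$, whose only saturated neighbors are their own matched partners. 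The cross-gadget edges $w_{ja}w'_{ka}$ exist only when $a \in c_j \cap c_k$, so both their endpoints are saturated only if $c_j$ and $c_k$ are both cover sets sharing $a$; disjointness of $\mathcal{X}'$ rules this out, which is the second, decisive use of exactness. Finally, any alternating path through $y x_q$ enters a gadget via some $w_{k2}$ or $w'_{k2}$ and immediately dead-ends at $w_{k1}$ or $w'_{k1}$.

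I expect the hardest part to be this alternating-cycle analysis, in two respects: recognizing that $p_j$ and $s_j$ must be left unmatched (precisely to kill the internal $8$-cycle, which also reconciles the count to exactly $\ell$ rather than $3n+1$), and formalizing that disjointness of the cover simultaneously prevents the internal cycles from chaining across elements and prevents any cross-gadget cycle via the shared-element interface edges. Once these structural observations are in place, Proposition~\ref{defurm} gives that $M$ is uniquely restricted, completing the proof of Lemma~\ref{AM-1}.
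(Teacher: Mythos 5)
Your construction of $M$ is exactly the paper's matching (the paper lists the universe edges $u_au'_a$ globally for all $a\in\mathcal{U}$, while you attribute them to their cover gadgets, but since $\mathcal{X}'$ is an exact cover the resulting edge set and the count are identical), and your overall strategy --- exhibit $M$, verify $|M|=\ell$, and rule out alternating cycles via Proposition~\ref{defurm} --- is also the paper's. However, the alternating-cycle analysis, which you yourself identify as the heart of the proof, rests on two claims that are false in the graph $G$ as constructed, both stemming from overlooking edges incident to the instance selector. First, for a non-cover gadget $Q_j$ with $c_j \in \mathcal{C}\setminus\mathcal{X}_q$, the saturated vertex $w_{j2}$ is adjacent to the saturated vertex $x_q$, and $w'_{j2}$ is adjacent to the saturated vertex $y$ (and also to $s_j$, which is saturated in a non-cover gadget); so it is not true that the saturated vertices of such a gadget ``reach the outside only through unsaturated interface vertices.'' Second, and more seriously, $y$ is adjacent to $w'_{j2}$ and to the second set of interface vertices $w'_{ja}, w'_{jb}, w'_{jc}$ of \emph{every} gadget $Q_j$ for which $c_j \notin \mathcal{X}_i$ for some $i$ --- and this includes cover gadgets $c_j\in\mathcal{X}'$ whenever $c_j \notin \mathcal{X}_i$ for some $i \neq q$. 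In a cover gadget those interface vertices are saturated (matched to $v'_a$, etc.), so your claim that every alternating path entering a universe $4$-cycle ``dead-ends at $w_{j2}$ or $w'_{j2}$, whose only saturated neighbors are their own matched partners'' fails for $w'_{j2}$, and your final claim that any alternating path through $yx_q$ ``immediately dead-ends at $w_{k1}$ or $w'_{k1}$'' fails on the $y$ side: from $y$ an alternating path can run $y, w'_{j2}, w'_{j1}, w'_{ja}, v'_a, \ldots$ deep into a cover gadget and its universe $4$-cycles.

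The conclusion still holds, but only because of an asymmetry your argument never isolates: $x_q$ is adjacent only to gadgets with $c_j \notin \mathcal{X}_q$, which are all non-cover (so the path from $x_q$ genuinely dead-ends at $w_{k1}$, whose remaining neighbors are unsaturated interface vertices), and $x_q$ is \emph{not} adjacent to $w_{j2}$ of any cover gadget, so the long alternating paths emanating from $y$ can never close back through $x_q$. The paper makes this precise by an ordered elimination on a shortest alternating cycle $C$: first $u_au'_a\notin C$ (a shortening argument), then $w_{j1}w_{j2} \notin C$ for all $j$ (using exactly the asymmetry above), then $x_qy \notin C$ (since it would force some $w_{j1}w_{j2}$ into $C$), then $w'_{j1}w'_{j2} \notin C$ (here the paper explicitly uses that $y$ \emph{is} a saturated neighbor of $w'_{j2}$ for cover gadgets and reduces to the previous step), and only then the vertical and triangle edges. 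To repair your proof you would need to replace the two ``dead-end'' claims with such an ordered argument, or with an explicit traversal argument showing that every alternating walk leaving $y$ terminates at some $w_{j2}$ with no saturated exit.
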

\begin{proof}
Let $\mathcal{X'}$ be a solution of $(\mathcal{U},\mathcal{X}_{q})$. Now, we will construct a matching $M$ as follows. \begin{itemize}
    \item For every $a\in \mathcal{U}$, add the edge $u_{a}u'_{a}$ to $M$.
    \item For every $c_{j}\in \mathcal{X'}$, if $c_{j}=\{a,b,c\}$, then add the edge set $\{v_{a}w_{ja},v'_{a}w'_{ja},v_{b}w_{jb},v'_{b}w'_{jb},v_{c}w_{jc},v'_{c}w'_{jc}\}$ to $M$.
    \item For every $c_{j}\in \mathcal{C}$, add the edge set $\{w_{j1}w_{j2},w'_{j1}w'_{j2}\}$ to $M$. 
    \item For every $c_{j}\in \mathcal{X'}$, add the edge $q_{j}r_{j}$ to $M$.
    \item For every $c_{j}\in \mathcal{C} \setminus \mathcal{X'}$, add the edge set $\{p_{j}q_{j},r_{j}s_{j}\}$ to $M$. 
    \item Finally, add the edge $x_{q}y$ to $M$. 
\end{itemize}  

Since $|\mathcal{X'}|=\frac{n}{3}$, $|M|=n+2n+2|\mathcal{C}|+\frac{n}{3}+2(|\mathcal{C}|-\frac{n}{3})+1=\ell$. Also, it is immediate to see that $M$ is a matching. However, it remains to show that $M$ is a uniquely restricted matching of $G$.  

For the sake of contradiction, assume that $M$ is not a uniquely restricted matching. Now, by Theorem \ref{defurm},
there must exist an alternating cycle, say $C$, in $G[V_{M}]$. Furthermore, assume that $C$ is an alternating cycle with the fewest edges among all available alternating cycles. Now, we will show that no edge of $M$ can be a part of $C$, and hence $C$ cannot exist in $G[V_{M}]$. 
\medskip


First, if $u_{a}u'_{a}$ belongs to $C$ for some $a\in \mathcal{U}$, we claim that there must exist another alternating cycle in $G[V_{M}]$ that does not contain $u_{a}u'_{a}$ whose number of edges is less than the one of $C$. Indeed, since $v_{a}v'_{a}\notin M$, the only way $u_au'_{a}\in C$ is if $v_{a}x$ and $v'_{a}y$ are also in $C$ for some $x$ and $y$. Now, we can shorten the cycle $C$ by replacing the part $xv_{a}u_{a}u'_{a}v'_{a}y$ with $xv_{a}
v'_{a}y$. Thus, we can assume wlog that $C$ does not contain any edge of the form $u_{a}u'_{a}$, where $a\in \mathcal{U}$. 


Now, we claim that $w_{j1}w_{j2}$ does not belong to $C$ for any $c_{j}\in \mathcal{C}$. If $c_{j}\in \mathcal{X'}$, then none of the neighbors of $w_{j2}$ except $w_{j1}$ will be saturated by $M$ as $x_{q}$ is not adjacent to $w_{j2}$ in this case. On the other hand, if $c_{j}\in \mathcal{C}\setminus\mathcal{X'}$, then none of the neighbors of $w_{j1}$ except $w_{j2}$ will be saturated by $M$. 

Now, note that $yx_{q}$ cannot belong to $C$ as for $yx_{q}$ to belong to an alternating cycle, some edge of the form $w_{j1}w_{j2}$ where $c_{j}\in \mathcal{C} \setminus \mathcal{X}_{q}$ would also need to be included in the cycle, which is not possible by the arguments given above. 

Now, we claim that $w'_{j1}w'_{j2}$ does not belong to $C$ for any $c_{j}\in \mathcal{C}$.
Indeed, for $w'_{j1}w'_{j2}$ to be part of an alternating cycle, note that $x_{q}y$ must also be part of that cycle. This is true because if $c_{j}\in \mathcal{X'}$, then the only $M$-saturated neighbor of $w'_{j2}$ other than $w'_{j1}$ is $y$. On the other hand, if $c_{j}\in \mathcal{C}\setminus\mathcal{X'}$, then the only $M$-saturated neighbor of $w'_{j1}$ other than $w'_{j2}$ is $x_{q}$. By the above arguments, we know that $yx_{q}$ does not belong to $C$, and thus $w'_{j1}w'_{j2}$ cannot belong to $C$ for any $c_{j}\in \mathcal{C}$.


If $v_{a}w_{ja}$ (resp. $v'_{a}w'_{ja}$) belongs to $C$ for some $c_{j}$ that contains $a$, then $C$ must contain $w_{j1}w_{j2}$ (resp. $w'_{j1}w'_{j2}$) as well, as the only $M$-saturated neighbor of $w_{ja}$ (resp. $w'_{ja}$) other than $v_{a}$ (resp. $v'_{a}$) is $w_{j1}$ (resp. $w'_{j1}$). But since we have already proved that edges of the form $w_{j1}w_{j2}$ (resp. $w'_{j1}w'_{j2}$) do not belong to $C$ for any $c_{j}\in \mathcal{C}$, we are done.

If $q_{j}r_{j}$ belongs to $C$ for some $c_{j}\in \mathcal{X}'$, it is not possible because none of the neighbors of $q_{j}$ or $r_{j}$ are $M$-saturated, ruling out such edges from $C$.

For $p_{j}q_{j}$ to belong to $C$ for some $c_{j}\in \mathcal{C}\setminus\mathcal{X}'$, it would require edges of the form $w_{j1}w_{j2}$ to be part of $M$, which is not possible.

Similarly, for $s_{j}r_{j}$ to belong to $C$ for some $c_{j}\in \mathcal{C}\setminus\mathcal{X}'$, edges of the form $w'_{j1}w'_{j2}$ would need to be part of $M$, which is not possible.

Since none of the edges in $M$ can be part of the alternating cycle $C$, it follows that no alternating cycle $C$ exists in $G$, and thus $M$ is a uniquely restricted matching in $G$.
\end{proof}

\subsection{From Uniquely Restricted Matching to Exact-3-Cover} Let $G$ be as defined in Subsection \ref{const:kernel}. Now, consider the following partition of the edges of $G$. See Fig.~\ref{fig5} for an illustration.

\begin{itemize}
    \item Type-I: $\bigcup_{a\in \mathcal{U}}\{v_{a}v'_{a},v'_{a}u'_{a},u'_{a}u_{a},u_{a}v_{a}\}$.
    \item Type-II: For every $a\in\mathcal{U}$ and every $c_{j}\in \mathcal{C}$ such that $a\in c_{j}$, edges of the form $\{v_{a}w_{ja},v'_{a}w'_{ja}\}$. Edges of these kinds are also called  \emph{vertical edges}.
    
    For every $a\in\mathcal{U}$ and for every distinct $c_{j},c_{k}\in \mathcal{C}$ such that $a$ is an element of both  $c_{j},c_{k}$, edges of the form $\{w_{ja}w'_{ka},w_{ka}w'_{ja}\}$. Edges of these kinds are also called   \emph{horizontal edges}.

\item Type-III: For every $c_{j}\in \mathcal{C}$, if $c_{j}=\{a,b,c\}$, then edges of the form  $\{w_{ja}p_{j},w_{jb}p_{j},w_{jc}p_{j},$ $w'_{ja}s_{j},w'_{jb}s_{j},\\w'_{jc}s_{j},w_{ja}w_{j1},w_{jb}w_{j1},w_{jc}w_{j1},w'_{ja}w'_{j1},w'_{jb}w'_{j1},w'_{jc}w'_{j1},w_{j1}w_{j2},w'_{j1}w'_{j2},w_{j2}p_{j},w'_{j2}s_{j}\}$. Furthermore, for every $c_{j}\in \mathcal{C}\setminus \mathcal{X}_{i},$  edges of the form $\{x_{i}w_{j2},x_{i}w_{ja},x_{i}w_{jb},x_{i}w_{jc},yw'_{j2},yw'_{ja},yw'_{jb},yw'_{jc}\}$.

\item Type-IV: For every $c_{j}\in \mathcal{C}$, edges of the form $\{p_{j}q_{j},q_{j}r_{j},r_{j}s_{j},p_{j}r_{j}\}$.

\item Type-V: Edges of the form $\{x_{i}y:i\in [t]\}$.

\end{itemize}

\begin{figure}[h]
 \centering
    \includegraphics[scale=0.6]{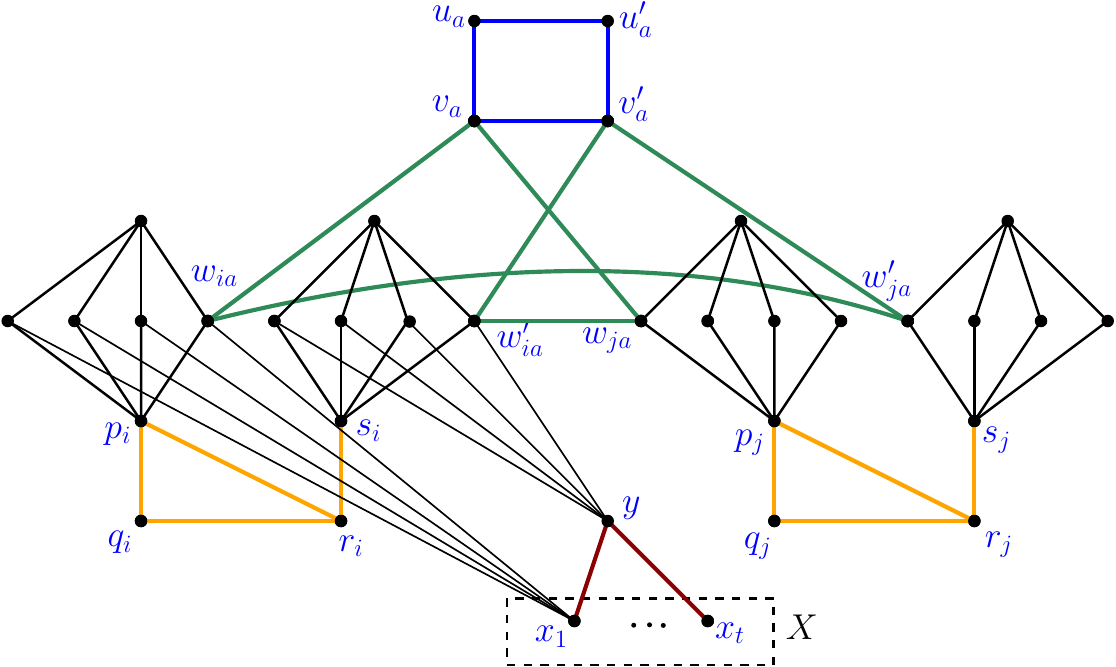}
    \caption{An illustration of the construction of $G$, where both $c_{i}$ and $c_{j}$ contain $a$. Here, Type-I edges are shown in blue, Type-II in green, Type-III in black, Type-IV in orange, and Type-V in deep red.}
    \label{fig5}
\end{figure}

 Next, in Lemmas~\ref{lem:1}-Lemma\ref{lem:5}, we derive an upper bound for each edge type on the number of its edges that can appear in $M$.

\begin{lemma} \label{lem:1}
For any urm $M$ of $G$, $|M\cap Type$-$I|\leq n$.   
\end{lemma}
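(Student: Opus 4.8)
The plan is to exploit the fact that the Type-I edges decompose into $n$ pairwise vertex-disjoint $4$-cycles and to show that a uniquely restricted matching can use at most one edge from each of them. Concretely, for each element $a \in \mathcal{U} = [n]$ the four Type-I edges $v_{a}v'_{a}, v'_{a}u'_{a}, u'_{a}u_{a}, u_{a}v_{a}$ form a $4$-cycle $C_{a}$ on the vertex set $\{v_{a}, v'_{a}, u'_{a}, u_{a}\}$, and for distinct $a, a' \in [n]$ these vertex sets (and hence edge sets) are disjoint by construction. It therefore suffices to establish that $|M \cap E(C_{a})| \leq 1$ for every $a \in [n]$, since summing over the $n$ disjoint cycles then gives $|M \cap \text{Type-I}| = \sum_{a \in [n]} |M \cap E(C_{a})| \leq n$.

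First I would note that the only way a matching can contain two edges of the $4$-cycle $C_{a}$ is to contain one of its two perfect matchings, namely $\{v_{a}v'_{a}, u'_{a}u_{a}\}$ or $\{v'_{a}u'_{a}, u_{a}v_{a}\}$, because these are the only two pairs of pairwise non-adjacent edges in a $4$-cycle. I would then argue that either possibility contradicts the hypothesis that $M$ is uniquely restricted. Indeed, if $M$ contains a perfect matching of $C_{a}$, then all four vertices of $C_{a}$ are $M$-saturated and hence belong to $V_{M}$; since all four edges of $C_{a}$ are present in $G$, the cycle $C_{a}$ survives as a cycle of $G[V_{M}]$, and by the choice of the two edges every second edge of $C_{a}$ lies in $M$. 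Thus $C_{a}$ is an alternating cycle with respect to $M$ in $G[V_{M}]$, which by Proposition~\ref{defurm} contradicts the assumption that $M$ is a uniquely restricted matching.

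Having ruled out two edges per cycle, the claimed bound is immediate from the vertex-disjointness argued above. I do not anticipate any genuine obstacle in this lemma, as it is a direct application of the alternating-cycle characterization; the only point meriting a line of care is to confirm that the relevant alternating cycle lives in the induced subgraph $G[V_{M}]$ and not merely in $G$, which holds precisely because saturating $C_{a}$ forces all of its four vertices into $V_{M}$ while all of its edges are already edges of $G$.
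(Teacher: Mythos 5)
Your proof is correct and follows essentially the same route as the paper: decompose the Type-I edges into $n$ vertex-disjoint $4$-cycles and invoke Proposition~\ref{defurm} to rule out two matching edges per cycle. You merely spell out the step the paper leaves implicit (two matching edges in a $4$-cycle form one of its perfect matchings and hence an alternating cycle), which is a fine elaboration, not a different argument.
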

\begin{proof}
  Let $M$ be a urm of $G$. Since Type-I edges consist of a disjoint union of $n$ cycles of length $4$, by Proposition \ref{defurm} at most one edge from each of these $n$ cycles can be part of a urm. Thus, $|M\cap Type$-$I|\leq n$.
\end{proof}

\begin{lemma} \label{lem:2}
For any urm $M$ of $G$, $|M\cap Type$-$III|\leq 2|\mathcal{C}|$.
\end{lemma}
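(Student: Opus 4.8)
The plan is to prove the bound gadget-by-gadget: since every Type-III edge belongs to a unique set gadget $Q_j$, the Type-III edge set partitions according to $c_j \in \mathcal{C}$, and it therefore suffices to show that $M$ contains at most two Type-III edges of any single $Q_j$. Summing $2$ over the $|\mathcal{C}|$ gadgets then gives $|M \cap \text{Type-III}| \leq 2|\mathcal{C}|$ immediately.

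First I would unpack the local structure of the Type-III edges of a fixed gadget $Q_j$. These edges split into an \emph{unprimed} part and a \emph{primed} part, which share no vertex. Inspecting the set gadget and the instance selector, the unprimed part is exactly a complete bipartite graph with the four vertices $\{w_{j2}, w_{ja}, w_{jb}, w_{jc}\}$ on one side, and $w_{j1}$, $p_j$ together with every $x_i$ satisfying $c_j \in \mathcal{C}\setminus\mathcal{X}_i$ on the other side. Indeed, each of $w_{j1}$, $p_j$, and each such $x_i$ is joined by a Type-III edge to all four of $w_{j2}, w_{ja}, w_{jb}, w_{jc}$, while no Type-III edge runs inside either side; thus the base $K_{2,4}$ on $\{w_{j1},p_j\}$ versus $\{w_{j2},w_{ja},w_{jb},w_{jc}\}$ is merely enlarged by the instance-selector edges into a larger complete bipartite graph. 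The primed part is symmetric, with $\{w'_{j2}, w'_{ja}, w'_{jb}, w'_{jc}\}$ on one side and $\{w'_{j1}, s_j, y\}$ on the other.

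The key step is to show that any urm contains at most one edge of a complete bipartite subgraph of this form. Suppose for contradiction that $M$ contains two such edges; since $M$ is a matching they are vertex-disjoint, say $\ell_1 r_1$ and $\ell_2 r_2$ with $\ell_1 \neq \ell_2$ on the small side and $r_1 \neq r_2$ on the four-vertex side. Completeness of the bipartite graph guarantees that $r_1 \ell_2$ and $r_2 \ell_1$ are also edges of $G$, so $\ell_1\, r_1\, \ell_2\, r_2\, \ell_1$ is a four-cycle whose edges alternate between $M$ and $E(G)\setminus M$. By \autoref{defurm} this contradicts the assumption that $M$ is uniquely restricted. Hence $M$ takes at most one edge from the unprimed part and at most one from the primed part of $Q_j$, i.e.\ at most two Type-III edges per gadget, and summing over all $c_j \in \mathcal{C}$ finishes the argument.

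I expect the only delicate point to be the bookkeeping in the second step: one must verify carefully that the instance-selector edges $x_i w_{j2}, x_i w_{ja}, \ldots$ (and the symmetric edges incident to $y$) are exactly the edges that extend the two base $K_{2,4}$'s, and that they never create an edge \emph{inside} a part. This is what makes the ``at most one matching edge per part'' argument apply uniformly, regardless of how many selector vertices $x_i$ attach to $Q_j$; once this is confirmed, the alternating-cycle contradiction is routine.
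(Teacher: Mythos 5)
Your proof is correct, and it rests on the same key mechanism as the paper's: two vertex-disjoint matching edges inside a complete bipartite subgraph form an alternating $4$-cycle, contradicting Proposition~\ref{defurm}. The difference is in the per-gadget decomposition. The paper works with the two base $K_{2,4}$'s, $E_j$ on $\{w_{j1},p_j\}$ versus $\{w_{j2},w_{ja},w_{jb},w_{jc}\}$ and its primed counterpart $E'_j$, shows $|M\cap E_j|\leq 1$ and $|M\cap E'_j|\leq 1$, and then handles the instance-selector edges by a separate implication: if $x_iw_{jx}\in M$ (resp.\ $yw'_{jx}\in M$) then $M\cap E_j=\emptyset$ (resp.\ $M\cap E'_j=\emptyset$). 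You instead fold the selector vertices into the bipartite graphs from the start, so each gadget's Type-III edges are exactly the edges of two vertex-disjoint complete bipartite graphs, and you apply the alternating-cycle argument once per graph. Your packaging buys something concrete: it automatically excludes the configuration of two selector edges $x_iw_{jx},x_{i'}w_{jx'}\in M$ with $i\neq i'$ matched into the same gadget, a case the paper's stated implications do not by themselves rule out (they only kill $M\cap E_j$), yet which must be excluded to obtain the per-gadget bound of two; in your setting $x_i$ and $x_{i'}$ sit on the same side of the complete bipartite graph, so the $4$-cycle $w_{jx},x_i,w_{jx'},x_{i'},w_{jx}$ is immediate. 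The only imprecision in your write-up is harmless: $y$ (resp.\ the vertices $x_i$) attach to the primed (resp.\ unprimed) side of $Q_j$ only when $c_j\in\mathcal{C}\setminus\mathcal{X}_i$ for some $i$, so the small side may be just $\{w'_{j1},s_j\}$ (resp.\ $\{w_{j1},p_j\}$); this only shrinks the bipartite graph and does not affect the argument.
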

\begin{proof}
  Let $M$ be a urm of $G$. For some arbitrary $c_{j}=\{a,b,c\}$, let $E_{j}$ denote the set of edges $\{w_{j1}w_{j2},w_{j1}w_{ja},$ $w_{j1}w_{jb},$ $w_{j1}w_{jc},p_{j}w_{j2},p_{j}w_{ja},p_{j}w_{jb},p_{j}w_{jc}\}$ and let $E'_{j}$ contain the set of edges $\{w'_{j1}w'_{j2},w'_{j1}w'_{ja},w'_{j1}w'_{jb},$ $w'_{j1}w'_{jc},$ $s_{j}w'_{j2},s_{j}w'_{ja},s_{j}w'_{jb},s_{j}w'_{jc}\}$. Due to Proposition \ref{defurm}, note that $|M\cap E_{j}|\leq 1$, and similarly, $|M\cap E'_{j}|\leq 1$. Also, note that if for some $x_{i}\in X$, $x_{i}w_{jx}\in M$, where $x\in \{a,b,c,2\}$, then $M\cap E_{j}=\emptyset$. Else, an alternating cycle of the form $w_{jx},x_{i},w_{jx'},p_{j},w_{jx}$ or $w_{jx},x_{i},w_{jx'},w_{j1},w_{jx}$ will be formed in $G[V_{M}]$ where $x'(\neq x)\in\{a,b,c,2\}$. Similarly, if $yw'_{jx}\in M$, where $x\in \{a,b,c,2\}$, then $M\cap E'_{j}=\emptyset$. Thus, by the definition of Type-III edges, it follows that $|M\cap Type$-$III|\leq 2|\mathcal{C}|$.\end{proof}

\begin{lemma}\label{lem:3}
For any urm $M$ of $G$, $|M\cap Type$-$IV|\leq 2|\mathcal{C}|$.
\end{lemma}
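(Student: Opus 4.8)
The plan is to exploit the fact that, for each $c_{j}\in\mathcal{C}$, all four Type-IV edges $p_{j}q_{j}, q_{j}r_{j}, r_{j}s_{j}, p_{j}r_{j}$ are confined to the vertex set $\{p_{j},q_{j},r_{j},s_{j}\}$ of the subgraph $Q'_{j}$. These four edges form a triangle on $p_{j},q_{j},r_{j}$ together with a pendant edge $r_{j}s_{j}$. The first observation I would record is that, by the indexing in the construction of the set gadgets, the vertices $p_{j},q_{j},r_{j},s_{j}$ occur only in $Q_{j}$; hence for distinct $j,k$ the Type-IV edges of $c_{j}$ and of $c_{k}$ are vertex-disjoint. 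Consequently the bound on $|M\cap Type\text{-}IV|$ decomposes additively over the elements of $\mathcal{C}$, and it suffices to bound, for each fixed $c_{j}$, the number of Type-IV edges of $c_{j}$ that $M$ can contain.

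For a fixed $c_{j}$, the restriction of $M$ to the four Type-IV edges of $c_{j}$ is a matching inside the four-vertex graph $Q'_{j}$. Since any matching on four vertices contains at most two edges, we immediately get $|M\cap (Type\text{-}IV\text{ edges of }c_{j})|\leq 2$. (In fact one checks directly that the only way to realize two disjoint edges here is $\{p_{j}q_{j},\,r_{j}s_{j}\}$, since $p_{j}q_{j},q_{j}r_{j},p_{j}r_{j}$ pairwise share a vertex and $r_{j}s_{j}$ meets every other edge except $p_{j}q_{j}$; but we only need the cardinality bound, not the explicit extremal configuration.) Note that this step uses only the fact that $M$ is a matching, so Proposition~\ref{defurm} is not even required here, in contrast with Lemmas~\ref{lem:1} and~\ref{lem:2} where the absence of alternating cycles was essential to beat the naive matching bound.

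Summing the per-gadget bound over all $c_{j}\in\mathcal{C}$ and using the disjointness from the first paragraph yields
\[
|M\cap Type\text{-}IV| \;=\; \sum_{c_{j}\in\mathcal{C}} |M\cap (Type\text{-}IV\text{ edges of }c_{j})| \;\leq\; 2|\mathcal{C}|,
\]
which is the claim. There is essentially no hard step in this argument; the only point requiring care is verifying the disjointness claim, namely that no Type-IV edge of $c_{j}$ has an endpoint shared with any other gadget's Type-IV edges, so that restricting $M$ to each $Q'_{j}$ genuinely partitions $M\cap Type\text{-}IV$. This is immediate from the construction, since the edges listed under Type-IV never link vertices of different gadgets.
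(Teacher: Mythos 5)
Your proof is correct and follows essentially the same route as the paper, which simply notes that $|M\cap E(Q'_{j})|\leq 2$ for every $c_{j}\in\mathcal{C}$ and sums over the (vertex-disjoint) gadgets. Your additional observations — that only the matching property is needed (not Proposition~\ref{defurm}), and that the unique extremal configuration is $\{p_{j}q_{j},r_{j}s_{j}\}$ — are accurate and in fact anticipate the paper's Remark~\ref{rmrk:1}.
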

\begin{proof}
  Let $M$ be a urm of $G$. The lemma holds by the fact that for every $c_{j}\in \mathcal{C}$, $|M\cap E(Q'_{j})|\leq 2$. 
\end{proof}
\begin{remark} \label{rmrk:1}
    If $M$ is a matching of $G$, then for every $c_{j}\in \mathcal{C}$, the following hold: $|M\cap E(Q'_{j})|= 2$ if and only if $p_{j}q_{j},r_{j}s_{j}\in M$. 
\end{remark}
\begin{lemma} \label{lem:4}
For any matching $M$ of $G$,  $|M\cap Type$-$V|\leq 1$.
\end{lemma}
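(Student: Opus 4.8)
The plan is to exploit the fact that all Type-V edges share a single common endpoint. Recall from the \textbf{Instance Selector} part of the construction that the edges $\{x_i y : i \in [t]\}$ are precisely the edges of a star $K_{1,t}$ whose central vertex is $y$. First I would observe that every Type-V edge is incident to $y$, since by definition each such edge is of the form $x_i y$ for some $i \in [t]$.

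From here the argument is immediate from the definition of a matching. Since $M$ is a matching, its edges are pairwise non-adjacent, so at most one edge of $M$ can be incident to any fixed vertex; in particular at most one edge of $M$ can be incident to $y$. As every Type-V edge is incident to $y$, it follows that $M$ can contain at most one of them, giving $|M \cap Type\text{-}V| \le 1$. Notably, this bound uses only that $M$ is a matching, not that it is uniquely restricted, which is why the statement is phrased for an arbitrary matching rather than a urm. There is no real obstacle here: the entire content is the single structural observation that the Type-V edges form a star centered at $y$, after which the matching condition on $y$ closes the argument.
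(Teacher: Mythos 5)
Your proof is correct and matches the paper's approach: the paper simply declares the bound immediate from the definition of Type-V edges, and your argument spells out exactly why, namely that all edges $x_iy$ share the endpoint $y$, so a matching can contain at most one of them.
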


\begin{proof} The proof is immediate, by the definition of Type-V edges.\end{proof}
\begin{remark}
   Observe that each Type-II edge can be uniquely associated with an element of $\mathcal{U}$, determined by its endpoints. We refer to this associated element as the \emph{index} used by the Type-II edge. For example, the edge $v_{a}w_{ja}$ is said to use the index $a$. 
\end{remark}
\begin{lemma} \label{lem:5}
For any urm $M$ of $G$, each index is used by at most two edges of Type-II. Furthermore,  $|M\cap Type$-$II|\leq 2n$. 
\end{lemma}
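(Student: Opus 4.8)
The plan is to prove the first assertion---that at most two $Type$-$II$ edges of $M$ use any fixed index $a\in\mathcal U$---and then deduce $|M\cap Type$-$II|\le 2n$ for free by summing over the $n$ indices, using that every $Type$-$II$ edge is associated with a unique index. So fix $a$ and let $J_a=\{j:a\in c_j\}$. The index-$a$ $Type$-$II$ edges are exactly the vertical edges $v_aw_{ja}$ and $v'_aw'_{ja}$ together with the horizontal edges $w_{ja}w'_{ka}$ for distinct $j,k\in J_a$, all living on $\{v_a,v'_a\}\cup\{w_{ja},w'_{ja}:j\in J_a\}$. The structural observation I would rely on is that, taking $\{v_a\}\cup\{w'_{ja}\}_{j}$ as one side and $\{v'_a\}\cup\{w_{ja}\}_{j}$ as the other, these edges together with the $Type$-$I$ edge $v_av'_a$ form the complete bipartite graph on the two sides minus exactly the \emph{diagonal} non-edges $w_{ja}w'_{ja}$ ($j\in J_a$). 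Call this graph $\Gamma_a$ and set $M_a=M\cap(index\text{-}a\ Type$-$II)$, a matching of $\Gamma_a$ that never uses $v_av'_a$.

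The core claim is: if $|M_a|\ge 3$, then $G[V_M]$ contains an $M$-alternating cycle, contradicting that $M$ is a urm by Proposition~\ref{defurm}. To locate it, I would introduce an auxiliary digraph $\vec D$ whose vertices are the edges of $M_a$, with an arc $e\to e'$ precisely when the endpoint of $e$ on one fixed side of $\Gamma_a$ is joined, by a non-matching edge of $\Gamma_a$, to the endpoint of $e'$ on the other side. A directed cycle $e_1\to e_2\to\cdots\to e_t\to e_1$ in $\vec D$ then lifts to the closed walk that alternates the matching edges $e_1,\dots,e_t$ with the non-matching connecting edges; since distinct edges of a matching have distinct endpoints on each side, this walk repeats no vertex and is a genuine alternating cycle of length $2t\ge 4$ in $G[V_M]$.

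The crux is an out-degree count in $\vec D$. Because the only absent edges of $\Gamma_a$ are the diagonals $w_{ja}w'_{ja}$, and because the endpoints used by $M_a$ on each side are pairwise distinct, each edge $e$ of $M_a$ fails to send an arc to at most one other edge---namely the unique edge, if any, whose starting endpoint is the forbidden diagonal partner of $e$'s endpoint. Hence every vertex of $\vec D$ has out-degree at least $|M_a|-2$. When $|M_a|\ge 3$ this is at least $1$, and a finite digraph with minimum out-degree at least $1$ always contains a directed cycle; this yields the alternating cycle and the contradiction, so $|M_a|\le 2$. Summing the bound $|M_a|\le 2$ over all $n$ indices gives $|M\cap Type$-$II|\le 2n$.

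The main obstacle I anticipate is the bookkeeping around $v_a$ and $v'_a$: unlike the interface vertices, they carry full degree in $\Gamma_a$ (they have no forbidden diagonal partner), so I must check that this only ever raises out-degrees and never lets the walk traverse $v_av'_a$ as a matching edge---which is automatic, since $v_av'_a$ is $Type$-$I$ and hence never in $M_a$, while the arc definition excludes stepping along a diagonal and never forces the use of $v_av'_a$. The remaining checks are routine: that every connecting edge in the lifted cycle is genuinely present in $G$ (guaranteed by the ``only diagonals missing'' description of $\Gamma_a$) and that it is genuinely non-matching (guaranteed because a connecting edge joins endpoints of two \emph{different} matched edges). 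Once these are verified, the argument closes.
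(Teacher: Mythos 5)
Your proof is correct, but it takes a genuinely different route from the paper. The paper proves the same bound by explicit case analysis: it first shows that two vertical edges with the same index must attach to the same set gadget (else $v_a,w_{ja},w'_{ka},v'_a,v_a$ is an alternating $4$-cycle), and then enumerates the ways three index-$a$ edges could appear in $M$ (all horizontal, in several sub-configurations; two vertical plus one horizontal; one vertical plus two horizontal), exhibiting a concrete alternating $4$- or $6$-cycle in each case. You instead package all index-$a$ Type-II edges, together with $v_av'_a$, into the bipartite graph $\Gamma_a$ on sides $\{v_a\}\cup\{w'_{ja}\}_j$ and $\{v'_a\}\cup\{w_{ja}\}_j$, observe (correctly --- the construction supplies every vertical and horizontal edge and the Type-I edge, and omits exactly the pairs $w_{ja}w'_{ja}$) that $\Gamma_a$ is complete bipartite minus the diagonals, and then run a uniform argument: in the auxiliary digraph on $M_a$, each edge misses at most one out-arc (the one blocked by its diagonal partner), so $|M_a|\ge 3$ forces minimum out-degree $\ge 1$, hence a directed cycle, which lifts to an $M$-alternating cycle of length $\ge 4$, contradicting Proposition~\ref{defurm}. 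The checks you flag (no self-loops since $A(e)B(e)$ is the matching edge itself; connecting edges are non-matching because $M$ is a matching; distinct matched edges give distinct vertices, so the lift is a genuine cycle) all go through. What each approach buys: the paper's argument is elementary and produces explicit short cycles that are reused in spirit in Lemma~\ref{lem:6} (in particular its first step, that two vertical edges with a common index must lie in one gadget, is an auxiliary fact beyond the lemma statement), whereas your argument handles all configurations at once with no enumeration --- and hence no risk of a missed case --- and actually proves the more general fact that any matching of size at least $3$ in a complete bipartite graph minus a partial diagonal matching admits an alternating cycle. The final summation over the $n$ indices, using that each Type-II edge uses exactly one index, matches the paper.
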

\begin{proof}
  Let $M$ be a urm of $G$. First, we show that for any $a\in \mathcal{U}$, if both $v_{a}$ and $v'_{a}$ are saturated by $M$ and $v_{a}v'_{a}\notin M$, then there exists some $c_{j}\in \mathcal{C}$ with $a\in c_{j}$ such that the edges $v_{a}w_{ja}$ and $v'_{a}w'_{ja}$ are in $M$. In other words, the following situation cannot occur: For distinct $c_j,c_k\in \mathcal{C}$ with $a\in c_{j},c_{k}$, $M$ contains the edges
$v_{a}w_{ja}$ and $v'_{a}w'_{ka}$. This is because it would create an alternating cycle in $G[V_{M}]$ of the form $v_{a},w_{ja},w'_{ka},v'_{a},v_{a}$. Thus, if two vertical edges corresponding to the same index belong to $M$, they must saturate vertices in exactly one set gadget.

Next, we claim that if $|M\cap Type$-$II|> 2n$ for some urm $M$, then there must exist an index, say $a$, that appears in at least three Type-II edges. To see why, note that if every index is used in at most two Type-II edges, then since there are at most $n$ indices and every Type-II edge uses exactly one index, it would follow that $|M\cap Type$-$II|\leq 2n$, contradicting the assumption.

Next, assume that only the horizontal edges of Type-II corresponding to the index $a$ are included in $M$. Now, the following cases are possible. First, if $w_{ia}w'_{ja}, w'_{ia}w_{ka}, w_{ja}w'_{ka}$  $(\text{or}\ w_{ja}w'_{la})\in M$, then a cycle of the form $w_{ia},w'_{ja}, w_{ka},w'_{ia},w_{ja},w'_{ka},w_{ia} $ $(\text{or}\ w_{ia},w'_{ja}, w_{ka},w'_{ia},w_{ja},w'_{la},w_{ia})$ is formed in $G[V_{M}]$. Next, if $w_{ia}w'_{ja}, w'_{ia}w_{ja}\in M$, then the third horizontal edge must be of the form $w_{ka}w'_{la}$ where $a\in c_{k},c_{l}$. In this case, a cycle of the form $w_{ia},w'_{ja},w_{ka},w'_{la},w_{ia}$ is formed in $G[V_{M}]$. Lastly, if $w_{ia}w'_{ja},w_{ka}w'_{la}\in M$, then $w_{ia},w'_{ja},w'_{la},w_{ka},w_{ia}$ is an alternating cycle in $G[V_{M}]$, which is a contradiction to the fact that $M$ is a urm. 

Moving forward, assume that $M$ contains both vertical and horizontal Type-II edges corresponding to the index~$a$. First, let there be two vertical edges and at least one horizontal edge in $M$. In this case, first note that the vertical edges must be of the form $v_{a}w_{ja},v'_{a}w'_{ja}$ for some $c_{j}\in \mathcal{C}$ with $a\in c_{j}$. Let the horizontal edge be of the form $w_{ka}w'_{ia}$ for some $c_{k},c_{i}\in \mathcal{C}$ with $a\in c_{k},c_{i}$. Now, an alternating cycle of the form $w_{ka},w'_{ja},v'_{a},v_{a},w_{ja},w'_{ia},w_{ka}$ will be formed in $G[V_{M}]$, which is a contradiction to the fact that $M$ is a urm. 

To this end, assume that $M$ contains at least one vertical edge, say $v_aw_{ja}$ wlog, and at least two horizontal edges. Among the horizontal edges in $M$, at least one must saturate vertices in two set gadgets, say $c_{k},c_{i}\in \mathcal{C}$, that is different from the one that is touched by the vertical edge in $M$, which is $c_{j}$. In this case, an alternating cycle of the form $v_{a},w_{ja},w'_{ka},w_{ia},v_{a}$ (or $v_{a},w_{ja},w'_{ia},w_{ka},v_{a}$) will be formed in $G[V_{M}]$. 

Thus, by the arguments given above, we can say that every index can be used by at most two edges of Type-II and thus, $|M\cap Type$-$II|\leq 2n$.
\end{proof}

Next, to prove the main lemma (Lemma~\ref{AM-2}), we establish a lower bound on the number of Type-II and Type-V edges that can be included in $M$, assuming $M$ is a solution of $(G, \ell)$  as described in Lemmas~\ref{lem:7} and~\ref{am}.  In this process, we invoke Lemma~\ref{lem:6} and introduce the concepts of \emph{happy} and \emph{sad} set gadgets (see Definition~\ref{def:16}).

\begin{lemma} \label{lem:6}
    If the same index, say $a$,  is used for two Type-II edges in $M$, then there exists one set gadget, say $Q_{j}$, such that  $|E(Q'_{j})\cap M|\leq 1$ and $a\in c_{j}$. 
\end{lemma}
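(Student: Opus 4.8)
The plan is to assume the hypothesis and argue by contradiction about the structure of the (at most two, by Lemma~\ref{lem:5}) Type-II edges that use the index $a$. First I would enumerate the possible configurations of these two edges --- both vertical, both horizontal, or one of each --- and reuse exactly the short alternating-cycle arguments from the proof of Lemma~\ref{lem:5} to discard the configurations that cannot coexist in a urm. The purpose of this enumeration is to show that in every surviving configuration the two edges \emph{pinch} a single set gadget $Q_j$ with $a\in c_j$, in the sense that both the first-set interface vertex $w_{ja}$ and the second-set interface vertex $w'_{ja}$ are $M$-saturated, and moreover their external $M$-mates are joined by an edge not in $M$: the Type-I edge $v_av'_a$ when both pinching edges are vertical (forcing a common gadget), a vertical edge $v_aw_{la}$ in the mixed case (which the cycle arguments force to satisfy $j=m$, so the horizontal edge is $w_{la}w'_{ja}$), and a horizontal edge in the purely horizontal case.

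Given such a gadget $Q_j$, I would assume for contradiction that $|E(Q'_j)\cap M|=2$. By Remark~\ref{rmrk:1} this forces $p_jq_j,\, r_js_j\in M$, so that inside $Q'_j$ the path $p_j$--$q_j$--$r_j$--$s_j$ alternates as $M$, non-$M$, $M$. I would then close this into an alternating cycle by leaving the gadget through the unmatched edges $w_{ja}p_j$ and $s_jw'_{ja}$ and completing it through $w_{ja}$, its external $M$-mate, the non-$M$ edge joining the two mates, the other mate, and $w'_{ja}$. For instance, in the two-vertical case $\{v_aw_{ja},v'_aw'_{ja}\}\subseteq M$ the cycle is
\[
v_a,\ w_{ja},\ p_j,\ q_j,\ r_j,\ s_j,\ w'_{ja},\ v'_a,\ v_a,
\]
whose edges alternate between $M$ and non-$M$ and which therefore contradicts Proposition~\ref{defurm}; the mixed and the non-symmetric horizontal cases yield cycles of exactly the same shape (replacing $v_av'_a$ by the appropriate vertical or horizontal edge, whose existence uses that horizontal edges join distinct gadgets). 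This gives $|E(Q'_j)\cap M|\le 1$, as desired.

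The step I expect to be the main obstacle is the symmetric purely-horizontal configuration $\{w_{ja}w'_{ka},\, w_{ka}w'_{ja}\}\subseteq M$, where the two edges pinch \emph{two} gadgets $Q_j$ and $Q_k$ at once and the external mates of $w_{ja}$ and $w'_{ja}$ do not lie in a common edge, so the single-gadget cycle above does not close. Here I would instead argue that if both $Q_j$ and $Q_k$ satisfied $|E(Q'_j)\cap M|=|E(Q'_k)\cap M|=2$, then chaining both copies of the $p$--$q$--$r$--$s$ pattern with the two horizontal matched edges produces the longer alternating cycle $p_j,q_j,r_j,s_j,w'_{ja},w_{ka},p_k,q_k,r_k,s_k,w'_{ka},w_{ja},p_j$; since this again contradicts Proposition~\ref{defurm}, at least one of $Q_j,Q_k$ must satisfy the bound $\le 1$, which already suffices for the existence claim. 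Verifying edge existence and the matched/unmatched alternation in each configuration is the routine but delicate bookkeeping underlying the argument.
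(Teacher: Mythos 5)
Your proposal is correct and takes essentially the same route as the paper's proof: the same case analysis on the two Type-II edges sharing index $a$ (two vertical edges in a common gadget, the mixed vertical--horizontal case, the chained horizontal case, and the symmetric horizontal pair treated separately via a 12-cycle through both gadgets), the same use of Remark~\ref{rmrk:1} to force $p_jq_j,r_js_j\in M$ in a happy gadget, and the same alternating cycles closing through $Q'_j$, contradicting Proposition~\ref{defurm}. Your unified ``pinching'' template is a slightly cleaner packaging, and your explicit cycles are in fact more accurate than the paper's (its Case~2 cycle lists the non-edge $q_js_j$, a typo your version avoids).
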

\begin{proof}
  Let $M$ be a urm of $G$. To prove the lemma, suppose that index $a\in \mathcal{U}$ is used twice. Now, the following cases are possible. \smallskip

\noindent \textbf{Case 1:} $v_{a}w_{ja},v'_{a}w'_{ja}\in M$ for some $ c_{j}\in \mathcal{C}$ such that $a\in c_{j}$. In this case, if $|E(Q'_{j})\cap M|=2$, then by Remark~\ref{rmrk:1}, $p_{j}q_{j},r_{j}s_{j}\in M$. This implies that there exists an alternating cycle of the form $v_{a},v'_{a},w'_{ja},s_{j},r_{j},q_{j},p_{j},w_{ja},v_{a}$ in $G[V_{M}]$.
\smallskip

\noindent \textbf{Case 2:} $w_{ja}w'_{ka},w_{ka}w'_{ja}\in M$ for some $ c_{j},c_{k}\in \mathcal{C}$ such that $a\in c_{j},c_{k}$. In this case, if $|E(Q'_{j})\cap M|=|E(Q'_{k})\cap M|=2$, then by Remark \ref{rmrk:1}, $p_{j}q_{j},r_{j}s_{j},p_{k}q_{k},r_{k}s_{k}\in M$. This implies that there exists an alternating cycle of the form $w_{ja},p_{j},q_{j},s_{j},r_{j},w'_{ja},w_{ka},p_{k},q_{k},s_{k},r_{k},w'_{ka},w_{ja}$ in $G[V_{M}]$.\smallskip

\noindent \textbf{Case 3:} $v_{a}w_{ja},w_{ka}w'_{ja}\in M$ for some $ c_{j},c_{k}\in \mathcal{C}$ such that $a\in c_{j},c_{k}$ (the case is analogous when $v'_{a}w'_{ja},w'_{ka}w_{ja}\in M$). In this case, if $|E(Q'_{j})\cap M|=2$, then by Remark \ref{rmrk:1}, $p_{j}q_{j},r_{j}s_{j}\in M$. This implies that there exists an alternating cycle of the form $v_{a},w_{ja},p_{j},q_{j},r_{j},s_{j},w'_{ja},w_{ka},v_{a}$ in $G[V_{M}]$.\smallskip

\noindent \textbf{Case 4:} $w_{ia}w'_{ja},w_{ja}w'_{ka}\in M$ for some $ c_{i},c_{j},c_{k}\in \mathcal{C}$ such that $a\in c_{i},c_{j},c_{k}$. In this case, if $|E(Q_{j})\cap M|=2$, then by Remark \ref{rmrk:1}, $p_{j}q_{j},r_{j}s_{j}\in M$. This implies that there exists an alternating cycle of the form $w_{ia},w'_{ja},s_{j},r_{j},q_{j},p_{j},w_{ja},w'_{ka},w_{ia}$ in $G[V_{M}]$.

In all the above cases, whenever for any Type-II edge in the matching corresponding to some index in $\mathcal{U}$, there exist two edges in $M$, then there exists a set gadget which cannot have two edges in $M$.
\end{proof}
\begin{definition} \label{def:16}
    A set gadget $Q_{j}$ is \emph{sad}  if $|M\cap E(Q'_{j})|\leq 1$; otherwise it is said to be \emph{happy}.
\end{definition}
\begin{lemma} \label{lem:7}
If $M$ is a solution of $(G,\ell)$, then $|M\cap Type$-$II|= 2n$ and exactly $\frac{n}{3}$ set gadgets are sad.
\end{lemma}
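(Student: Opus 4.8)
The plan is to combine the five type-wise upper bounds of Lemmas~\ref{lem:1}--\ref{lem:5} with a refinement that links Type-II, Type-IV and the sad/happy dichotomy, and then to exploit the fact that $\ell$ sits exactly at the boundary forced by these constraints. I introduce three quantities: let $d$ be the number of indices of $\mathcal{U}$ used by exactly two Type-II edges of $M$ (the \emph{doubly-used} indices), let $z$ be the number of indices used by no Type-II edge, and let $s_g$ be the number of sad set gadgets. Since every Type-II edge uses exactly one index and, by Lemma~\ref{lem:5}, each index is used at most twice, a direct count gives $|M\cap \text{Type-II}| = n + d - z$. For Type-IV, each happy gadget contributes exactly two edges of $E(Q'_j)$ to $M$ (Remark~\ref{rmrk:1}) while each sad gadget contributes at most one, so $|M\cap \text{Type-IV}| \le 2(|\mathcal{C}|-s_g)+s_g = 2|\mathcal{C}|-s_g$.

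First I would assemble the global upper bound. Adding the five contributions, using Lemmas~\ref{lem:1}, \ref{lem:2} and~\ref{lem:4} together with the two identities just stated, yields
\[
|M| \le n + (n+d-z) + 2|\mathcal{C}| + (2|\mathcal{C}|-s_g) + 1 = 2n + 4|\mathcal{C}| + 1 + (d - z - s_g).
\]
Since $M$ is a solution, $|M| \ge \ell = 4|\mathcal{C}| + \tfrac{8n}{3} + 1$, and comparing the two estimates gives $d - z - s_g \ge \tfrac{2n}{3}$.

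The crux is the matching upper bound $d - z - s_g \le \tfrac{2n}{3}$, for which I would bound $d$ in terms of $s_g$ via Lemma~\ref{lem:6}. That lemma assigns to every doubly-used index $a$ a sad gadget $Q_j$ with $a \in c_j$; fixing one such choice per index defines a map $\phi$ from the doubly-used indices into the set of sad gadgets. Because $\phi(a)=Q_j$ forces $a$ to be one of the three elements of $c_j$, each sad gadget is the image of at most three doubly-used indices, so that $d \le 3 s_g$. Combining this with the trivial bounds $d \le n$ and $z \ge 0$ gives
\[
d - z - s_g \;\le\; \tfrac{2}{3}d - z \;\le\; \tfrac{2}{3}n.
\]

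Finally, chaining $\tfrac{2n}{3} \le d - z - s_g \le \tfrac{2}{3}d - z \le \tfrac{2}{3}n$ forces every inequality to be an equality, which pins down $d = n$, $z = 0$ and $s_g = \tfrac{d}{3} = \tfrac{n}{3}$. Substituting back, $|M\cap \text{Type-II}| = n + d - z = 2n$ and exactly $\tfrac{n}{3}$ set gadgets are sad, as claimed. I expect the main obstacle to be the step $d \le 3 s_g$: Lemma~\ref{lem:6} guarantees only the \emph{existence} of an associated sad gadget, so the bound must come from the fibre size of $\phi$ (at most $|c_j|=3$ per gadget) rather than from any injectivity, and I would also verify that the various Type-II and Type-IV bounds are simultaneously tight precisely at this boundary.
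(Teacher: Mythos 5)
Your proof is correct and follows essentially the same route as the paper's: the same type-wise decomposition with the bounds of Lemmas~\ref{lem:1}--\ref{lem:5}, the same use of Lemma~\ref{lem:6} together with $|c_j|=3$ to get (in your notation) $s_g \ge d/3$, and the same arithmetic tightness at $\ell$. The only difference is organizational — you package everything into a single chain of inequalities forced to be equalities (with the exact count $n+d-z$ for Type-II), whereas the paper first deduces $q=n$ and then separately proves the upper bound of $\frac{n}{3}$ on the number of sad gadgets by a second counting contradiction.
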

\begin{proof}
      Let $M$ be a urm of $G$. Let us assume that $p$ indices are used in at most one Type-II edge in $M$ and $q$ indices are used in exactly two Type-II edges in $M$.

      By Lemma \ref{lem:6}, at least $q$ set gadgets, though not necessarily distinct, must be sad. Since each set gadget has exactly three indices, at least ${q}_{3}$ set gadgets must be distinct.

      To maximize $|M|$, we assume that at least $\frac{q}{3}$ set gadgets
    contribute one edge each, while $2(|\mathcal{C}|-\frac{q}{3})$ set gadgets contribute two edges each to $M$. Now, by Lemmas \ref{lem:1}-\ref{lem:5}, we have $|M|\leq n+p+2q+2|\mathcal{C}|+1+2(|\mathcal{C}|-\frac{q}{3})+\frac{q}{3}=n+p+4|\mathcal{C}|+1+q+\frac{2q}{3}\leq\frac{8n}{3}+4|\mathcal{C}|+1=\ell$. Since we want $|M|\geq \ell$ and $p+q\leq n$ with $q<n$, it follows that $q=n$.

    From the earlier arguments, at least $\frac{n}{3}$ set gadgets must be sad. We now claim that at most $\frac{n}{3}$ set gadgets should be sad. For contradiction, suppose $l>\frac{n}{3}$ gadgets are happy. Then, $|M|\leq n+2|\mathcal{C}|+1+2n+2(|\mathcal{C}|-l)+l=3n+4|C|+1-l<\ell$  for $-l<-\frac{q}{3}$, which is a contradiction.
      
      Therefore, exactly $\frac{n}{3}$ set gadgets are sad.
\end{proof}
\begin{lemma} \label{am}
  If $M$ is a solution of $(G,\ell)$, then $|M\cap Type$-$V|=1$.
\end{lemma}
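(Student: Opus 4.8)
The goal is to show that any solution $M$ of $(G,\ell)$ must include exactly one Type-V edge. By Lemma~\ref{lem:4}, we already know $|M\cap\text{Type-V}|\le 1$, so the entire task reduces to establishing the lower bound $|M\cap\text{Type-V}|\ge 1$; that is, at least one edge of the form $x_iy$ belongs to $M$. The natural strategy is a counting argument: I would assume for contradiction that $M$ uses no Type-V edge, and then show that the remaining edge types cannot supply enough edges to reach the target size $\ell = 4|\mathcal{C}|+\tfrac{8n}{3}+1$.

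The plan is to invoke the upper bounds from Lemmas~\ref{lem:1}--\ref{lem:5} together with the refined structural information from Lemma~\ref{lem:7}. Concretely, assume $x_iy\notin M$ for all $i\in[t]$, so $|M\cap\text{Type-V}|=0$. Summing the per-type bounds gives
\begin{equation*}
|M| = |M\cap\text{Type-I}| + |M\cap\text{Type-II}| + |M\cap\text{Type-III}| + |M\cap\text{Type-IV}| + |M\cap\text{Type-V}|.
\end{equation*}
Using $|M\cap\text{Type-I}|\le n$ (Lemma~\ref{lem:1}), $|M\cap\text{Type-II}|\le 2n$ (Lemma~\ref{lem:5}), $|M\cap\text{Type-III}|\le 2|\mathcal{C}|$ (Lemma~\ref{lem:2}), $|M\cap\text{Type-IV}|\le 2|\mathcal{C}|$ (Lemma~\ref{lem:3}), and the assumption $|M\cap\text{Type-V}|=0$, one obtains $|M|\le n+2n+2|\mathcal{C}|+2|\mathcal{C}| = 3n+4|\mathcal{C}|$. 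The point is that this bound falls short of $\ell = 4|\mathcal{C}|+\tfrac{8n}{3}+1$ only by the margin coming from the combined Type-I/Type-II/Type-IV slack, so a crude sum may not immediately beat $\ell$; I expect to need the sharper accounting from Lemma~\ref{lem:7}, which pins down $|M\cap\text{Type-II}|=2n$ and forces exactly $\tfrac{n}{3}$ sad gadgets, thereby capping $|M\cap\text{Type-IV}|$ below its generic maximum of $2|\mathcal{C}|$ (since each sad gadget contributes at most one Type-IV edge, not two). Feeding in $|M\cap\text{Type-IV}|\le 2(|\mathcal{C}|-\tfrac{n}{3})+\tfrac{n}{3} = 2|\mathcal{C}|-\tfrac{n}{3}$ yields $|M|\le n+2n+2|\mathcal{C}|+(2|\mathcal{C}|-\tfrac{n}{3}) = \tfrac{8n}{3}+4|\mathcal{C}| = \ell-1 < \ell$, contradicting that $M$ is a solution.

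The main obstacle will be justifying that the Type-IV count genuinely drops because of the $\tfrac{n}{3}$ forced sad gadgets, and making sure the Type-I, Type-II, and Type-III bounds can all be attained simultaneously with no hidden interaction that would force a further loss (which would only help us, but must be accounted for cleanly). I would lean on Lemma~\ref{lem:7} as a black box: it already fixes $q=n$ and the sad-gadget count, so the bookkeeping in its proof essentially mirrors what is needed here, the only twist being that we now set the Type-V contribution to $0$ rather than $1$. Since the computation in Lemma~\ref{lem:7} achieves $\ell$ exactly with the $+1$ coming precisely from the single Type-V edge, dropping that edge deterministically costs one unit and pushes $|M|$ to at most $\ell-1$. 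Hence $M$ cannot be a solution unless it contains a Type-V edge, and combined with Lemma~\ref{lem:4} we conclude $|M\cap\text{Type-V}|=1$, completing the proof.
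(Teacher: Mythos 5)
Your proposal is correct and follows essentially the same route as the paper's proof: assume $M\cap\text{Type-V}=\emptyset$, then combine the per-type upper bounds with the refinement from Lemma~\ref{lem:7} (exactly $\tfrac{n}{3}$ sad gadgets, so Type-IV contributes at most $2(|\mathcal{C}|-\tfrac{n}{3})+\tfrac{n}{3}$) to obtain $|M|\leq n+2n+2|\mathcal{C}|+2|\mathcal{C}|-\tfrac{n}{3}=\tfrac{8n}{3}+4|\mathcal{C}|=\ell-1<\ell$, contradicting $|M|\geq\ell$. You also correctly identified that the crude sum $3n+4|\mathcal{C}|$ alone does not suffice and that the sad-gadget accounting is the essential ingredient, which is exactly the role Lemma~\ref{lem:7} plays in the paper.
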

\begin{proof}
If $M\cap Type$-$V=\emptyset$, then by Lemmas \ref{lem:1}-\ref{lem:7}, $|M|\leq n+2|\mathcal{C}|+2n+2(|\mathcal{C}|-\frac{n}{3})+\frac{n}{3}=\frac{8n}{3}+4|C|<\ell$, which is a contradiction to the fact that $|M|\geq \ell$.
\end{proof}

\begin{lemma} \label{AM-2}
If $(G,\ell)$ admits a solution, then at least one instance $(X,\mathcal{S}_{q})\in \mathcal{I} $ also admits a solution.
\end{lemma}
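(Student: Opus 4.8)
The plan is to extract, from any solution $M$ of $(G,\ell)$, an exact cover of $\mathcal{U}$ that lives inside the collection selected by the unique instance-selector edge. First I would fix a uniquely restricted matching $M$ with $|M|\geq \ell$. By Lemma~\ref{am}, $M$ contains exactly one Type-V edge, say $x_qy$; this single edge is what pins down the candidate instance $(\mathcal{U},\mathcal{X}_q)$. By Lemma~\ref{lem:7}, I also know that $|M\cap\text{Type-II}|=2n$, so together with the bound of Lemma~\ref{lem:5} every index of $\mathcal{U}$ is used by \emph{exactly} two Type-II edges, and exactly $\tfrac{n}{3}$ set gadgets are sad. These facts are the whole input to the argument.

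Next I would show that the sad gadgets form an exact cover. For each index $a\in\mathcal{U}$, since $a$ is used by two Type-II edges, Lemma~\ref{lem:6} supplies a sad gadget $Q_j$ with $a\in c_j$; this defines a map sending each of the $n$ indices to one of the $\tfrac{n}{3}$ sad gadgets, with the chosen gadget always containing the index. Because each $c_j$ has exactly three elements, at most three indices can map to a given sad gadget; as $3\cdot\tfrac{n}{3}=n$, every sad gadget must receive exactly its own three elements and the preimages must be pairwise disjoint. Hence $\mathcal{X}'=\{c_j\colon Q_j\text{ is sad}\}$ is a family of $\tfrac{n}{3}$ pairwise-disjoint triples covering $\mathcal{U}$, i.e.\ an exact cover of $\mathcal{U}$ by members of $\mathcal{C}$.

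The crucial remaining step is to prove $\mathcal{X}'\subseteq\mathcal{X}_q$, which would immediately give that $(\mathcal{U},\mathcal{X}_q)$ is a yes-instance and finish the proof. I would argue by contradiction: suppose some $c_j\in\mathcal{X}'$ satisfies $c_j\in\mathcal{C}\setminus\mathcal{X}_q$, and let $a\in c_j$. Then, by the instance-selector construction, $x_q$ is adjacent to the first interface vertex $w_{ja}$ and $y$ is adjacent to the second interface vertex $w'_{ja}$. If index $a$ is realized at $Q_j$ by the two vertical edges $v_aw_{ja},v'_aw'_{ja}\in M$ (so that $v_av'_a\notin M$), then the six edges $x_qy,\ yw'_{ja},\ w'_{ja}v'_a,\ v'_av_a,\ v_aw_{ja},\ w_{ja}x_q$ form a cycle that alternates matched/unmatched starting from the matched edge $x_qy$. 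This is an alternating cycle in $G[V_M]$, contradicting Proposition~\ref{defurm}; therefore no sad cover gadget can lie outside $\mathcal{X}_q$.

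The hard part will be justifying the \emph{vertical realization} assumption used in the last step, namely that the two Type-II edges covering $a$ at a sad cover gadget are exactly $v_aw_{ja}$ and $v'_aw'_{ja}$, rather than horizontal edges $w_{ja}w'_{ka}$ joining interface vertices of different gadgets. Here I would exploit the tightness of the budget: summing the bounds of Lemmas~\ref{lem:1}--\ref{lem:5} against $\ell=4|\mathcal{C}|+\tfrac{8n}{3}+1$ forces equality in every type, in particular $|M\cap\text{Type-I}|=n$ (one edge per Type-I four-cycle) and $|M\cap\text{Type-III}|=2|\mathcal{C}|$ (two per gadget), while the exactness of $\mathcal{X}'$ forces any second gadget sharing an index with a sad cover gadget to be happy. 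Feeding these constraints into the case analysis of Lemma~\ref{lem:6} (vertical/vertical, vertical/horizontal, horizontal/horizontal) should eliminate every non-vertical configuration, or else close an analogous alternating cycle through $x_qy$ in each surviving case. Combined with the completeness direction (Lemma~\ref{AM-1}), this yields the equivalence required for the cross-composition.
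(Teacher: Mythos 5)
Your first two steps are correct, and in fact more carefully argued than the paper's own writeup: exactly one Type-V edge $x_qy$ (Lemma~\ref{am}), every index used by exactly two Type-II edges (Lemmas~\ref{lem:5} and~\ref{lem:7}), the choice map from indices to sad gadgets via Lemma~\ref{lem:6}, and the pigeonhole argument forcing the sad triples to be pairwise disjoint and to cover $\mathcal{U}$ --- all of this is sound, and your disjointness conclusion is exactly what is needed later. Your $6$-cycle $x_q,y,w'_{ja},v'_a,v_a,w_{ja}$ in the vertical case is a genuine alternating cycle. Note that your route also differs from the paper's: the paper argues that $x_q$ cannot be adjacent to any \emph{happy} gadget, whereas you aim directly at the inclusion that is actually required, namely that every \emph{sad} (cover) gadget lies in $\mathcal{X}_q$.

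The genuine gap is the step you flag yourself, and your proposed repair cannot work as stated. Budget tightness does \emph{not} eliminate non-vertical realizations: a horizontal pair $w_{ia}w'_{ja},\,w'_{ia}w_{ja}\in M$ joining the sad gadget $Q_j$ to a happy gadget $Q_i$ violates none of the counting bounds of Lemmas~\ref{lem:1}--\ref{lem:7}, and it creates no alternating cycle when $c_j\in\mathcal{X}_q$ (the selector edges $x_qw_{ja}$ and $yw'_{ja}$ are then absent), so no counting or tightness argument can exclude it; it must instead be converted into an alternating cycle under the assumption $c_j\notin\mathcal{X}_q$. For the mixed configuration $v_aw_{ja},\,w_{ka}w'_{ja}\in M$ and the shared-horizontal configuration $w_{ia}w'_{ja},\,w_{ja}w'_{ka}\in M$ (with $i\neq k$) this can be done with $6$-cycles through $x_qy$ much like yours, e.g.\ $x_q,w_{ja},v_a,w_{ka},w'_{ja},y$ using the non-matching edge $v_aw_{ka}$. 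But for the horizontal pair the cycle is \emph{not} analogous: since $c_i\in\mathcal{X}_q$ is possible, $y$ need not be adjacent to $w'_{ia}$, and one has to route through the interior of the happy partner, namely $w_{ja},w'_{ia},s_i,r_i,q_i,p_i,w_{ia},w'_{ja},y,x_q$ and back to $w_{ja}$ --- a $10$-cycle that alternates only because happiness of $Q_i$ forces $p_iq_i,r_is_i\in M$ (Remark~\ref{rmrk:1}), where the happiness of $Q_i$ itself follows from your disjointness of the sad triples (two sad gadgets cannot share the index $a$). Until this case analysis is actually carried out, the inclusion $\mathcal{X}'\subseteq\mathcal{X}_q$, and with it the lemma, remains unproved in your write-up.
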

\begin{proof}
    By Lemma \ref{am}, every solution of $(G,\ell)$ must contain an edge of the form $x_{q}y$ for some $q\in [t]$.  Now, suppose $x_q$ is adjacent to a happy gadget, say $Q_j$, corresponding to $ c_{j}\in \mathcal{C}$ such that $a\in c_{j}$. Then, both $p_jq_j$ and $s_jr_j$ would belong to $M$, along with $x_qy$. This would induce a cycle in $G[V_M]$ of the form $p_jq_j,q_jr_j,r_js_j,s_jw'_{ja},w'_{ja}y,yx_q,x_qw_{ja},w_{ja}p_j$. This contradicts the assumption that $M$ is a urm. Thus, $x_q$ cannot be adjacent to any happy gadget.
    
    Now, to complete the proof, we aim to show that the $\frac{n}{3}$ sad gadgets cover all the elements of $X$. This follows from the fact that there exists a subset of $\mathcal{S}_q$ of size $\frac{n}{3}$, where each element of $X$ appears exactly once. To establish this, we prove that every element appears at least once in a sad gadget. Suppose, for contradiction, that some element does not appear. By Lemma \ref{lem:7}, we know that every index should appear twice to satisfy $|M|$. Furthermore, by the previous assumption, these appearances must occur within happy gadgets. This leads to a contradiction to the definition of happy gadgets due to Lemma \ref{lem:6}. 

    Thus, every index appears at most once in sad gadgets. Since there are exactly $\frac{n}{3}$ sad gadgets, they must collectively cover all elements, completing the proof.
\end{proof}

Thus, by Lemmas \ref{AM-1} and \ref{AM-2}, we have the following theorem.
\vcthm*

\bibliographystyle{elsarticle-num}
\bibliography{lagos}

\end{document}